\newcommand{\given}{\, | \,}
\newcommand{\with}{\, ; \,}
\renewcommand{\d}{\mathrm{d}}
\newcommand{\micron}{\mu\mathrm{m}}
\newcommand{\msd}{\mathcal{M}}
\newcommand{\csahat}{\widehat{\Psi}}
\newcommand{\Prev}{P_\mathrm{reverse}}
\newcommand{\Pcont}{P_\mathrm{continue}}
\newcommand{\sigmac}{\sigma_{\mathrm{cargo}}}
\newtheorem{theorem}{Theorem}[section]
\theoremstyle{remark}
\numberwithin{equation}{section}
\begin{document}

\title{Considering experimental frame rates and robust segmentation analysis of piecewise-linear microparticle trajectories}

\author{%
  Keisha J. Cook\affil{1}\corrauth,
  Nathan Rayens\affil{3},
  Linh Do\affil{2},
  Christine K.~Payne\affil{3},
  and
  Scott A.~McKinley\affil{2}
}

\shortauthors{the Author(s)}

\address{%
  \addr{\affilnum{1}}{School of Mathematical and Statistical Sciences, Clemson University, Clemson, SC 29634, USA}
  \addr{\affilnum{2}}{Department of Mathematics, Tulane University, New Orleans, LA 70118, USA}
  \addr{\affilnum{3}}{Thomas Lord Department of Mechanical Engineering and Materials Science, Duke University, Durham, NC 27708, USA}
  }

\corraddr{keisha@clemson.edu; Tel: +1-864-656-3434.
}

\begin{abstract}
The movement of intracellular cargo transported by molecular motors is commonly marked by switches between directed motion and stationary pauses. The predominant measure for assessing movement is effective diffusivity, which predicts the mean-squared displacement of particles over long time scales. In this work, we consider an alternative analysis regime that focuses on shorter time scales and relies on automated segmentation of paths. Due to intrinsic uncertainty in changepoint analysis, we highlight the importance of statistical summaries that are robust with respect to the performance of segmentation algorithms. In contrast to effective diffusivity, which averages over multiple behaviors, we emphasize tools that highlight the different motor-cargo states, with an eye toward identifying biophysical mechanisms that determine emergent whole-cell transport properties. By developing a Markov chain model for noisy, continuous, piecewise-linear microparticle movement, and associated mathematical analysis, we provide insight into a common question posed by experimentalists: how does the choice of observational frame rate affect what is inferred about transport properties?

\end{abstract}

\keywords{
\textbf{Microparticle, single particle tracking, intracellular transport, changepoint}
}

\maketitle

\section{Introduction}

Intracellular transport has been studied widely to understand the movement and emergent organization of organelles and vesicles within biological cells \cite{mogre2020getting,Trackmate,Neumann,BARLAN2013483,Grafstein,GUZIK2004443,Staff20111015}. A variety of single particle tracking (SPT) techniques have been developed to capture individual microparticle trajectories as they evolve over time, both \emph{in vitro} and \emph{in vivo} \cite{BressloffNewby,SUH200563}, using methods that are shaped by a balance between emerging technological capabilities and the biological constraints of particular systems of interest. These methods have produced and will continue to produce fundamental insights into how micro- and nano-scale interactions accrue to produce whole-cell phenomena \cite{BA20183591,LeeEtAl,CAVISTON2006530,Lakadamyali2013,Klumpp2005,Klumpp2008,MULLER20102610,JONGSMA2016152,BARLAN2013483,KARKI199945,vale1987,Vale200088}. 

For intracellular transport mediated by processive molecular motors, the motion of biomolecular cargo is marked by switches between \emph{stationary} and \emph{motile} states. Time spent in either state can be affected by the propensity of motor proteins to either bind or unbind from microtubules, and their ability to process along microtubules when bound. The switches in behavior occur on a time scale of seconds, but transport of individual cargoes across a cell may require minutes, hours, or days. As a result, the temporal resolution at which location observations are made, or simply the \emph{frame rate}, has a profound impact on how trajectories are perceived. If a motor-cargo complex switches states on the order of once per second, then it is necessary to capture images at least 10 to 20 times per second to resolve the different behaviors. On the other hand, if the frame rate is on the order of once per 10 seconds or more, then switching is not evident and paths can look effectively Brownian.

Essential to these investigations are an array of mathematical models that link the transient and rapidly changing micro-scale interactions to stable features at the whole-cell scale. Two of the most popular ways to learn the parameters of these mathematical models are through \emph{segmentation analysis} and \emph{mean-squared displacement (MSD)}. Segmentation analysis is typically used for trajectories observed at ``fast'' frame rates where the hope is that the multitude of motor-cargo (or microparticle) behaviors are highly resolved and can be distinguished reliably \cite{JensenEtAl,YIN2018217,KARSLAKE202116,Chen2021,Monnier2015,Persson2013ExtractingID,Das2009,BarryHartigan,Erdman20071}. On the other hand, MSD analysis dispenses with the interest in fine-scale behavior and seeks to fit parameters that produce patterns observed at larger time scales \cite{Saxton,Manzo2015,Shen2017,Sokolov2012,QIAN1991910,RUTHARDT20111199,Michalet2010,Kepten2015,MONNIER2012616,Monnier2015}. However, both modes of investigation are prone to different types of systematic error. For MSD analysis, the inferred MSD curves are, by definition, averaging over multiple behaviors. It can happen that multiple fine-scale stochastic models can produce similar emergent MSD profiles. For segmentation analysis, the segmentation step itself is highly vulnerable to bias due to the choice of algorithm, or collection of humans, that do the work. 

In the work that follows, we focus on segmentation analysis, with an emphasis on identifying the impact of frame rate and the segmentation algorithm on different statistical summaries. Through analysis of simulated and experimental data, we demonstrate a statistical resilience that exists in identifying the proportion of time spent in different states. However, we highlight the limits of the method through a bias-variance tradeoff that emerges across different frame rates. As a consequence we show that, in terms of inference reliability, there is a real difference between two different notions of the term ``speed distribution'' and identify what appears to be a reliable metric for faithfully characterizing and comparing distinct populations of noisy, continuous, piecewise-linear trajectories.

\subsection{Background and experimental context}

In a typical SPT experiment, the microscope camera captures the positions of a particle over time at a defined number of frames per second (frame rate), which are combined to form a trajectory. Traditionally the frame rate has been decided by the size of target particles, biophysical constraints of the experimental system, the microscope's focal depth, and limitations on the duration of observation windows \cite{Shen2017,Manzo2015,Monnier2015,SUH200563}. It is natural to assume that experiments should be run at the fastest possible frame rates, but important tradeoffs exist.

The primary tradeoff we will study here arises when there are a limited number of observations that can be made of a single path. Accurate segmentation analysis requires fine enough temporal resolution that state transitions are identifiable, but if trajectories are too short, then too few transitions will be captured and faithful characterization will not be possible. One of the most common techniques in SPT, fluorescence microscopy, is fundamentally limited in this way due to \emph{photobleaching} \cite{10.1242/jcs.142943,Diekmann,LeleketAl}. Heuristically, particles of interest are labeled with fluorophores that emit photons when excited by a light source. This light exposure ultimately damages the fluorophores preventing fluorescence.  In terms of frame rate, if an experimentalist increases the frequency of observation to resolve finer time scales, the overall lives of fluorophores are shortened, leading to shorter trajectories. Experimentalists sometimes describe this as a "photon limited" process.

A second example of a frame-rate/path-length tradeoff exists in the segmentation algorithms themselves. The complexity of the task in identifying changepoints can grow with the square of the path length, and without proper penalization, substantial numbers of false positives emerge~\cite{CPOP}. Moreover, as more biophysical detail becomes visible, it can become a significant challenge to identify relevant state changes without chasing nuisance phenomena that emerge at the fastest frame rates. This was highlighted in work by Feng et al.~\cite{FengEtAl} in which the authors tracked individual molecular motor heads using gold nanoparticles at 1000 frames per second, see Fig.~3 in particular. The data is illuminating for studying individual motor steps and the propensity of motor heads to bind to microtubules when diffusing locally, but state-switches between stationary and motile phases are rarely observed at this time scale and difficult to identify. (See also, \cite{Mickolajczyk418053}.)

But experiments conducted with slower frame rates have their own tracking issues. Many imaging techniques are intrinsically two-dimensional in their observation and rendering. So, when particles move in the $z$-direction, they move in and out of the microscope's focal plane. This results in incomplete or broken trajectories, and cases where multiple individual particle trajectories are incorrectly combined. This phenomenon was considered explicitly by Wang et al.~\cite{wang2015minimizing} when attempting to conduct a ``census'' of fast and slow diffusing particles. The authors showed that a na\"ive count of paths leads to an overcount of fast-moving particles in the system. In another study, Lee et al.~\cite{LeeEtAl} considered the impact of frame rate on capturing molecules diffusing on a membrane, exhibiting switches between fast, intermediate, and immobile states. The authors opted for ``intermediate'' choices of 12 ms/frame ($\sim 83$ Hz) and 35 ms/frame ($\sim 28.5$ Hz).  Interestingly, the authors found that the signal-to-noise ratio was lower at 83 Hz, but molecules were lost by the tracker more frequently, especially in high-density regions. This led to a systemic challenge in estimating effective diffusivity. Some of these issues were addressed by Hansen et al.~\cite{SpotOnSPT} in the development of their SpotOn tool. Instead of using MSD techniques that rely on long particle trajectories, the authors developed an inference framework that pools together information from many short trajectories. They were able to robustly infer effective diffusivity across a range of fast frame rates (50-200 Hz) by explicitly accounting for particles moving in and out of the focal plane, motion-blur as a result of fast-diffusing molecules emitting photons over multiple pixels, tracking errors due to high particle densities, and under-counting of particles due to photobleaching. 

In the work we present here, the motivating data sets capture the movement of lysosomes in different types of cells. In \cite{Payne2014}, lysosomes were captured at a frame rate of 0.3s/frame (3.33 Hz) when studying the impact of diameter on intracellular transport. The authors estimated effective diffusivity from the slope of ensemble-averaged (MSD) curves and showed that increased lysosome diameter led to decreased effective diffusivity. However, the properties of active transport were not affected by lysosome diameter. This raises the challenge of developing a protocol that can identify whether reduced effective diffusivity is due to shorter periods of motility, or if motile durations were similar in length but relatively fewer compared to periods of stationarity. In a different experimental setting, we addressed this question when considering lysosomal transport in different regions of the cell \cite{RayensEtAl}. Using a faster frame rate (20 Hz) and employing Bayesian inference through a Markov chain model for state-switching, we were able to show that a regional difference in effective diffusivity was due to longer periods of stationarity (and not a difference in transport speeds or run lengths). This later led to a novel observation of an interaction between lysosome transport and the endoplasmic reticulum of cells \cite{RayensEtAl2}. 

\begin{figure}[H]
    \centering
    \begin{subfigure}[t]{0.48\textwidth}
        \centering
        \includegraphics[width=\textwidth]{{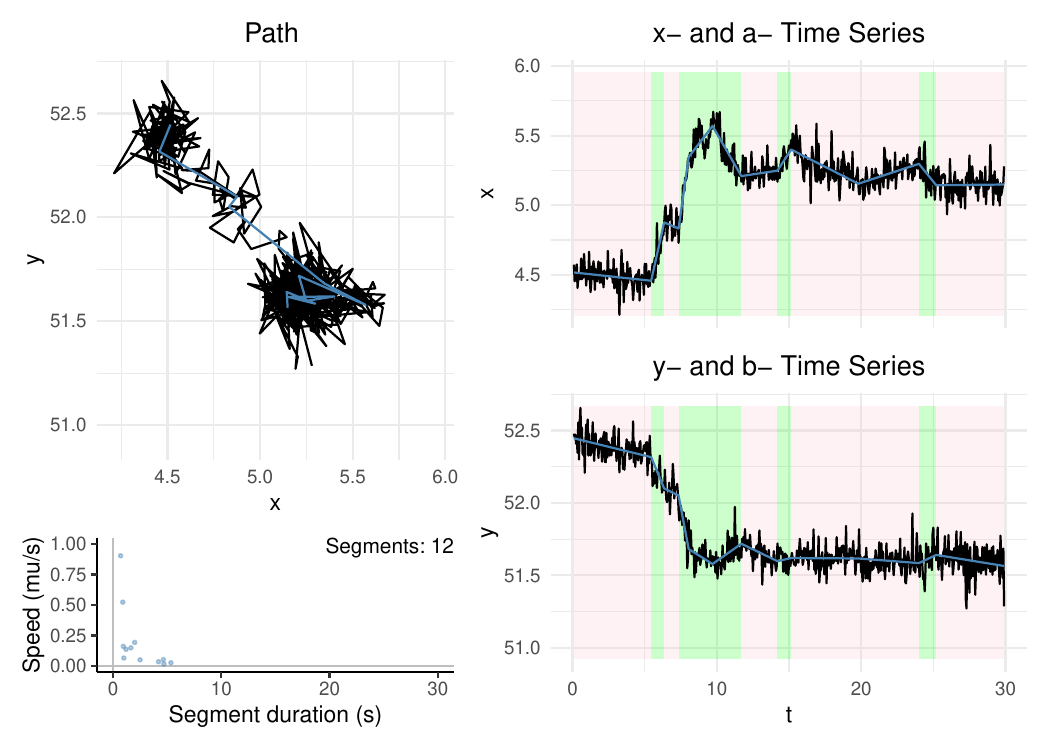}}
        \caption{One experimental lysosome trajectory in the periphery of a cell from \cite{RayensEtAl}.}
    \end{subfigure}
    ~
    \begin{subfigure}[t]{0.48\textwidth}
        \centering
        \includegraphics[width=\textwidth]{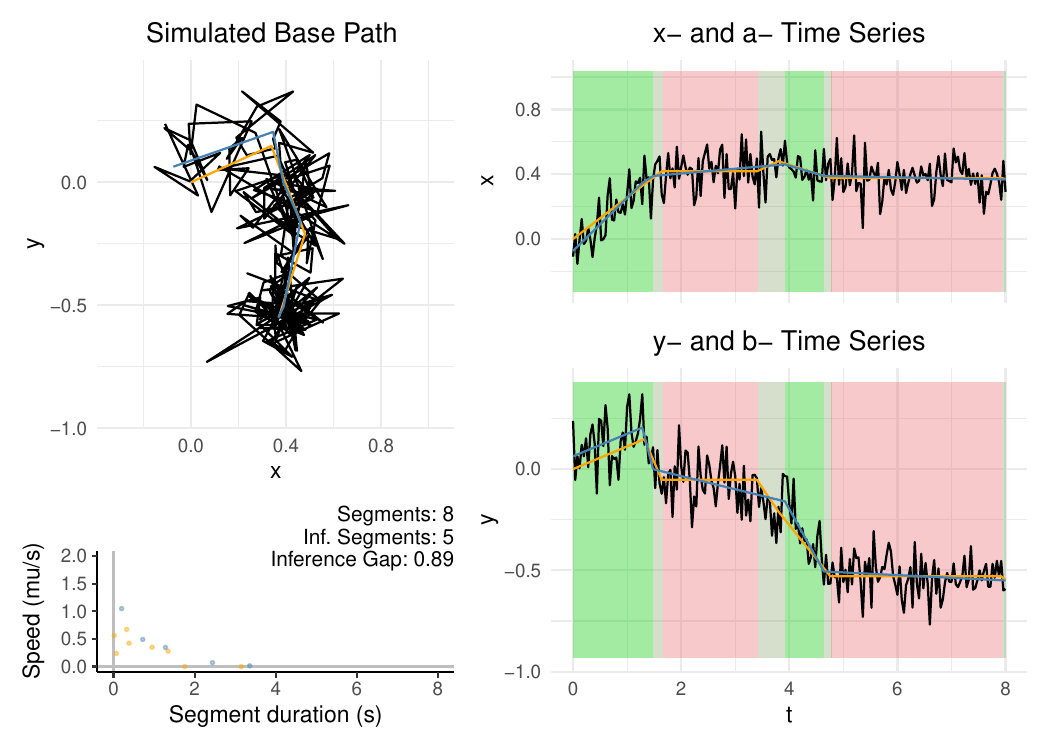}
        \caption{A trajectory simulated at 25Hz using the Base parameter set in Table \ref{tab:parameters}.}
    \end{subfigure}

    \caption{Each subfigure depicts the panel of trajectory analysis for a respective trajectory. The \textit{black} line denotes the trajectory; in the x-y coordinate plane as well as the time series for each coordinate. The \textit{blue} line in each figure represents the inferred anchor position described in Section \ref{sec:model}. In subfigure (b) the \textit{orange} line denotes the actual anchor position. After implementing the changepoint algorithm on each trajectory, the right panel in each subfigure shows the state segments for each time series. Each \textit{green} panel denotes a \emph{Motile} segment and each \textit{red} panel denotes a \emph{Stationary} segment. The shaded \textit{gray} panel in subfigure (b) denotes the differentiating overlap of the inferred vs actual segment panels, i.e. the inference gap detailed in Section \ref{sec:classify:inference-gap}. In the segment duration versus speed plot, the \textit{blue} points in both subfigures denote the inferred segments and the \textit{orange} points in subfigure (b) denote the actual simulated segments. \label{fig:intro}}
\end{figure}

\subsection{Summary of work}
In this manuscript, we develop a model and inference protocol for \emph{in vivo} intracellular cargo transport inspired by observed populations of lysosome trajectories. We model motor-cargo movement using a continuous-time Markov process that features switching between \emph{Stationary} and \emph{Motile} periods. The molecular motor is assumed to move in a piecewise deterministic manner, with random velocities that have magnitudes based on the current state and direction that are biased by the state and direction of the preceding segment. The cargo is assumed to rapidly fluctuate about the motor position, and our observations are of the sequence of cargo positions. In contrast to hidden Markov models that have been developed for intracellular transport \cite{RODING2014140,SvenssonEtAl}, we do not assume that motor velocities are drawn from a prescribed finite number of values. Rather, motivated by the observations published in \cite{RayensEtAl}, we assume that the speed distribution is a continuous one. 

In Figure~\ref{fig:intro} we display one experimentally observed particle (i.e. lysosome) trajectory and another trajectory simulated from a parameter set intended to mimic lysosome data. In each case our segmentation algorithm CPLASS \cite{CPLASS-git} has been applied and the inferred ``anchor process'' is shown in blue. Throughout this work we label segments with inferred speed greater than 100 nm/s as \emph{Motile} and label others as \emph{Stationary}. This is a threshold that is useful for visualization but does not play a role in the segmentation algorithm. For the lysosome trajectory, time segments associated with inferred motile states are shaded with light green and inferred stationary states are shaded 
pink. For the simulated data, we know the actual state at every time point and so we can assess the validity of the inferred state. Time steps that are correctly inferred to be Motile are shaded with a green background while correctly identified Stationary time steps are shaded red. The remaining time steps are periods of incorrect state inference. We will refer to the percentage of time spent incorrectly identified as the \emph{inference gap.}

Once segmentation of paths is fixed, it is natural to discuss a population's speed distribution \cite{ENCALADA2011,schroeder2020lps,RayensEtAl}. A typical objective is to compare the speed distribution of transport by two different motors, which requires a robust characterization of the data and quantification of uncertainty. However, as we observe below, studying velocity distributions in terms of segment counts can be very sensitive with respect to the behavior of segmentation algorithms and the experimental frame rate. In Section~\ref{sec:CSA:theory} we discuss a different notion of speed distribution that is robust with respect to segmentation assessment. This alternative characterization focuses on proportion of time spent at or below given speed levels, and to emphasize the distinction, we call this the Cumulative Speed Allocation (CSA). If we view transport speed as a function of time and use the terminology of non-Markovian stochastic process theory, the CSA is the invariant measure of the speed time-series. We can therefore calculate the long-term CSA of our model using Renewal-Reward theory \cite{ciocanel2020renewal}, Section~\ref{sec:CSA:theory}. We subsequently report the performance of the CSA, Section~\ref{sec:CSA:inference}, both in its capacity to robustly differentiate between distinct parameter sets, and in its utility in being able to find a parameter set that replicates \emph{in vivo} data. Moreover, in Section~\ref{sec:MSD}, we contrast the CSA results with reports on the performance of the more standard Mean-Squared Displacement (MSD) analysis of intracellular transport.

\section{Model Development and Numerical Methods\label{sec:model}}

Organelles, vesicles, and other intracellular cargo are often bound to one or more microtubules simultaneously through multiple molecular motors \cite{walcott2022modeling,Klumpp2008}.
Rather than model these cargo-motor-microtubule interactions explicitly, we treat the motor-microtubule connections through a solitary phenomenological \emph{anchor} process.  
We model the anchor as switching between \emph{Stationary} and \emph{Motile} phases. Each of these categories represents a collection of distinct biophysical states, but based on prior observations of \emph{in vivo} transport, we assume they are too numerous to distinguish and treat separately.

We constructed cargo trajectories by first simulating the sequence of anchor states, each of of which consists of a state $J_k$ (1 for \emph{Motile} and 0 for \emph{Stationary}), and a jointly distributed segment velocity $V_k \in \mathbb{R}^2$ and segment duration $\tau_k > 0$, where $k \in \mathbb{Z}_+$. The transition probabilities are described below. With the sequence of anchor segments established, there is a natural translation to a continuous-time Markov chain process $\{J(t), A(t), V(t)\}_{t \geq 0}$. The observed cargo locations are then a sequence of independent Gaussian random variables centered about the anchor positions evaluated over a uniformly discretized sequence of times $\{t_0, t_1, \ldots, t_n\}$. 

When the segment velocities and segment durations are not independent, the continuous-time process $\{J(t), A(t), V(t)\}_{t \geq 0}$ is non-Markovian. This is because the durations will be non-exponential. Consequently, it is necessary to either compute an invariant measure for the process $\{J(t)\}_{t\geq 0}$ or to simulate a burn-in period to allow the system to approach an invariant measure. We opted for the latter throughout this work. Based on Renewal-Reward analysis described in Section~\ref{sec:CSA:theory}, we computed the expected cycle time (the time from the beginning of one \emph{Stationary} state to another) we ran a burn-in time that allows for at least five cycles. We validated the success of this initialization when computing the Cumulative Speed Allocation (see below) and observing that the time spent in the \emph{Stationary} state approximately matched the expected value. 

\subsection{Definition of the anchor segment process}

In our model, we assume that anchor states depend solely on the previous state value. We define the 0-1 stochastic process $\{J_k\}_{k \geq 0}$ to have the following transition probabilities:
\begin{equation} \label{eq:J-transition-matrix}
    \begin{aligned}
    \text{\emph{Stationary}} \rightarrow \text{\emph{Motile}} &: \quad P(J_{k+1}=1|J_k=0) = p;\\
    \text{\emph{Stationary}} \rightarrow \text{\emph{Stationary}} &: \quad P(J_{k+1}=0|J_k=0) = 1-p;\\
    \text{\emph{Motile}} \rightarrow \text{\emph{Stationary}} &: \quad P(J_{k+1}=0 | J_k=1) = q;\\
    \text{\emph{Motile}} \rightarrow \text{\emph{Motile}} &: \quad P(J_{k+1} = 1| J_k = 1) = 1-q.\hspace*{1.5cm}
    \end{aligned}
\end{equation}
We initialized the burn-in period by sampling from the distribution $P(J_0 = 1) = \frac{p}{p+q}$, which is the steady-state distribution of the transition matrix defined by \eqref{eq:J-transition-matrix}. We set $X(0) = (0,0)$ sampled the initial direction from $\text{Unif}(0,2\pi)$. Subsequent directions depend on the previous direction and current state. If the new state is Stationary, then the direction is left unchanged. If the new state is Motile, then there are three possibilities: (1) with probability $\Prev$ the new direction is the reverse of the previous one; (2) with probability $\Pcont$, the anchor continuous in the same direction, but with a different velocity; and (3) with probability $1 - \Prev - \Pcont$, the new direction is chosen from $\text{Unif}(0,2\pi)$ independent of previous state information. The first two options are typical of tug-of-war models for molecular-motor-based transport, assuming that there are velocity changes along a microtubule. The third option stems from motors associated with a cargo switching to a different microtubule in the cytoskeletal network and assuming processive dominance. Each of these outcomes were observed in the data collected for \cite{RayensEtAl} and \cite{RayensEtAl2} and our parameter sets contain qualitative matches to what was observed.

The speed and duration of each of the state segments were defined conditionally on the current state. If the current state was $J_k = 0$, then the speed $S$ was $0$ and the duration $\tau$ was randomly selected from the distribution $\text{Exp}(1/\sigma)$, see Table~\ref{tab:parameters}. If the current state was $J_k = 1$, then the speed $S$ was randomly selected from the distribution $\text{Gamma}(\alpha,\beta)$. The segment durations were either assumed to be independent of the speed, or to have the following structure: 
\begin{displaymath}
    \tau\big|_S \sim \left\{\begin{array}{ll}
    \text{Exp}\big(S/\bar{D}\big), & \text{Dependent Model;} \\
    \text{Exp}\big(E(S)/\bar{D} \big), &
    \text{Independent Model}.
    \end{array} \right.
\end{displaymath}
In the gamma-distributed model we use, $E(S) = \alpha/\beta$. 

The time-dependent anchor process was calculated from the segment chain. Denoting the initial anchor position, $(a_0,b_0)$, we defined the sequence $(a_n, b_n) = (A(t_n), B(t_n)$ by the equations:
\begin{equation} \label{eq:cargosteps}
    a_n = \sum^{n}_{i=1}\nu^x_i (t_i - t_{i-1}); \quad b_n = \sum^{n}_{i=1}\nu^y_i (t_i - t_{i-1}); 
\end{equation}
for $i = 1, \dots, n$, where $t_i$ denotes the time step and $\nu^x_n$ and $\nu^y_n$ denote the anchor speed in the $x$ or $y$ directions at time $t_n$. The cargo observations are then assumed to be independent Gaussian fluctuations around a sequence of anchor locations:
\begin{equation} \label{eq:cargosteps}
    x_i = a_i + \sqrt{\sigmac} \epsilon^{(x)}_i, \quad y_i = b_i + \sqrt{\sigmac} \epsilon^{(y)}_i, 
\end{equation}
where $\{(\epsilon^{(x)}_i, \epsilon^{(y)}_i)\}_{i=1}^n$ are a sequence of independent and identically distributed 2D standard normal random variables with the identity matrix as a covariance. Most simulated paths were generated to have a fixed number of 200 observations.

In this work, we used three parameter sets for our simulations, summarized in Table \ref{tab:parameters}. The Base parameter set is qualitatively similar to parameters inferred from the lysosomal transport data, but with a clear separation from zero in the speed distribution. The Contrast parameter set was selected to make the separation between Stationary and Motile phases even more clear, and to emulate kinesin-1-based transport \emph{in vitro} as described in Jensen et al.~\cite{JensenEtAl}. The Mimic parameter set was selected to match the CSA empirically observed from peripheral lysosomes in the data collected for \cite{RayensEtAl}. Here, the separation between the Stationary and Motile states is not clear. In the Supplementary Information, we display simulated trajectories at various frame rates. 

\begin{table}[hbt!]
\centering
\begin{tabular}{c|c|c|c|l}
\hline
Parameter & Base & Contrast & Mimic & Description
\\ \hline
$n$ & 200 & 200 & 200 & Number of Observations \\
$p$ & 1 & 1 & 1 & Probability \emph{Stationary} to \emph{Motile}\\
$q$ & 0.5 & 0.5 & 0.5 & Probability \emph{Motile} to \emph{Stationary}\\ 
$\alpha$ & 8 & 16 & 0.5  & Speed Shape Parameter \\ 
$\beta$ & 0.02  & 0.02 & 0.005 & Speed Rate Parameter \\ 
 $\Bar{D}$ & 300 nm & 900 nm & 200 nm & Average Distance Traveled\\
 $\sigma$ & 5 s & 3 s & 1 s & Average \emph{Stationary} Duration\\
$\Prev$ & 0.3 & 0.3 & 0.3 & Probability of reversal \\
$\Pcont$ & 0.3 & 0.3 & 0.3 & Probability of same direction \\
$\sigmac$ & 0.1 & 0.1 & 0.1 & Noise magnitude of Cargo\\
$\Delta$ & \multicolumn{3}{c|}{$\{0.001, 0.01, 0.04, 0.05, 0.1, 1\} \mathrm{s}$} & Time Step\\
$t_f$ & $n \Delta$ & $n \Delta$ & $n \Delta$ & Final simulation time\\
\hline
\end{tabular}
\caption{\textsc{Parameters} "Base" denotes the parameters used to model simulated lysosome trajectories at any frame rate. "Contrast" denotes the parameters used to model kinesin-1 \emph{in vitro} transport. "Mimic" denotes the parameters used to model 20Hz experimental lysosome trajectories. }
\label{tab:parameters}
\end{table}

\subsection{Segmentation Algorithm \label{sec:changept}}

Our analysis proceeds by segmenting each path under the model assumption presented in the previous section. In parallel work, we have developed a segmentation algorithm called CPLASS (Continuous Piecewise-Linear Approximation - Stochastic Search) \cite{CPLASS-wip,CPLASS-git}. Essentially, we assume a linear Gaussian model for the data and perform regression in the space of piecewise-linear functions in two dimensions. For each proposed set of changepoints, we associate to segments the maximum likelihood vector of velocities. To compare two changepoint vectors we have a score function that is equal to the log-likelihood of each proposed changepoint-velocity-vector pair minus a penalty for complexity and for physically implausible speeds. There are two user-selected parameters: a complexity penalization parameter $\beta_{\mathrm{CPLA}}$ and a speed threshold parameter $s_{\mathrm{CPLA}}$. The stochastic search aspect of the algorithm is that we perform a Metropolis-Hastings search for the maximum score, effectively assuming that there is a Gibbs measure on the space of all possible changepoint vectors \cite{CPLASS-wip}. The results presented in this paper are based on selecting the changepoint vector with the highest score that is encountered during 5000 steps of a Metropolis-Hastings walk.

\section{Motor transport summaries and observation frame rates}
\label{sec:results}

The classical method for summarizing the movement of microparticles is mean-squared displacement (MSD). If $\{X_m\}_{m = 1}^M$ is a set of trajectories, with shared observation times $\{t_0, t_1, \ldots\}$ that are uniformly spaced with common duration $\delta = t_{i+1} - t_i$. The paths may be of varying lengths, denoted $\{n_m\}$. Then we write the pathwise MSD
\begin{equation} \label{eq:msd}
    \msd_m(\delta j) = \frac{1}{n_m - j - 1} \sum_{i=0}^{n_M-j} |X(t_{i + j}) - X(t_{i})|^2. 
\end{equation}
To average over the paths at each time point, let $M_j$ be the number of paths that have increments of size $\delta j$ and let $\{m_{i_j}\}$ be an enumeration of these paths. Then we can write the ensemble MSD
\begin{equation} \label{eq:emsd}
    \msd(\delta j) = \frac{1}{M_j} \sum_{m_j} \msd_{m_{i_j}}(\delta j). 
\end{equation}

Mean-squared displacement is a useful tool for describing long-term movement of particles, but does not capture the short-term behaviors that give rise to the long-term transport. This is because, for trajectories that arise from state-switching  mechanisms, the quantity $\msd(\delta j)$ ($j$ small) is averaged over many different movement modalities. As detailed in the Introduction, the use of segmentation algorithms allows for the decomposition of paths into distinct modes of transport. Given this segmentation framework, the question arises whether there is a useful summary statistic, separate from MSD, that is robust in characterizing populations of particles.

In previous work \cite{RayensEtAl,RayensEtAl2}, we have advocated for reporting transport properties in terms of the proportion of time spent in different states. In this work we generalize this principle to a sliding scale of values that we will call cumulative speed allocation (CSA). Adopting the statistical convention of using ``\,$\widehat{\phantom{p}}\,$'' for inferred quantities, suppose that a set of trajectories have been decomposed into (State, Duration, Velocity)-triplets: $\{\hat{J}_k, \hat{\tau}_k,\hat{V}_k\}_{k = 1}^K$. Then we define the inferred CSA to be the function
\begin{displaymath}
    \csahat(s) := \Big(\sum_{k = 1}^K \widehat{\tau}_k \mathbb{1}_{[|\widehat{V}_k|,\infty)}(s) \Big) \Big/ \Big(\sum_{k=1}^K \widehat{\tau}_k\Big).
\end{displaymath}
In other words, for every $s \geq 0$, the CSA is the inferred proportion of time spent at speeds less than or equal to $s$.

In Section~\ref{sec:robustness} we offer some motivation for this choice of statistic. In Section~\ref{sec:classify:inference-gap} and Section~\ref{sec:CSA:theory} we assess its sensitivity to observational frame rate. For contrast, in Section~\ref{sec:MSD}, we present a method for estimating MSD from segmented data, but when looking at the impact of frame rate, we demonstrate a need for improved theory. 

\subsection{Distribution of speeds versus allocation of time}
\label{sec:robustness}

While segmentation allows for a more refined assessment of the mechanistic underpinnings of molecular motor transport, there remains the challenge of finding summary statistics that are (1) robust to measurement methods and (2) sensitive enough to reveal differences between qualitatively distinct populations. One natural mode of analysis is to look at the distribution of inferred speeds \cite{EncaladaEtAl}. However, this turns out to be a surprising example of non-robustness with respect to segmentation algorithm.
\begin{figure}[H]
    \centering
    \includegraphics[width=0.9\linewidth]{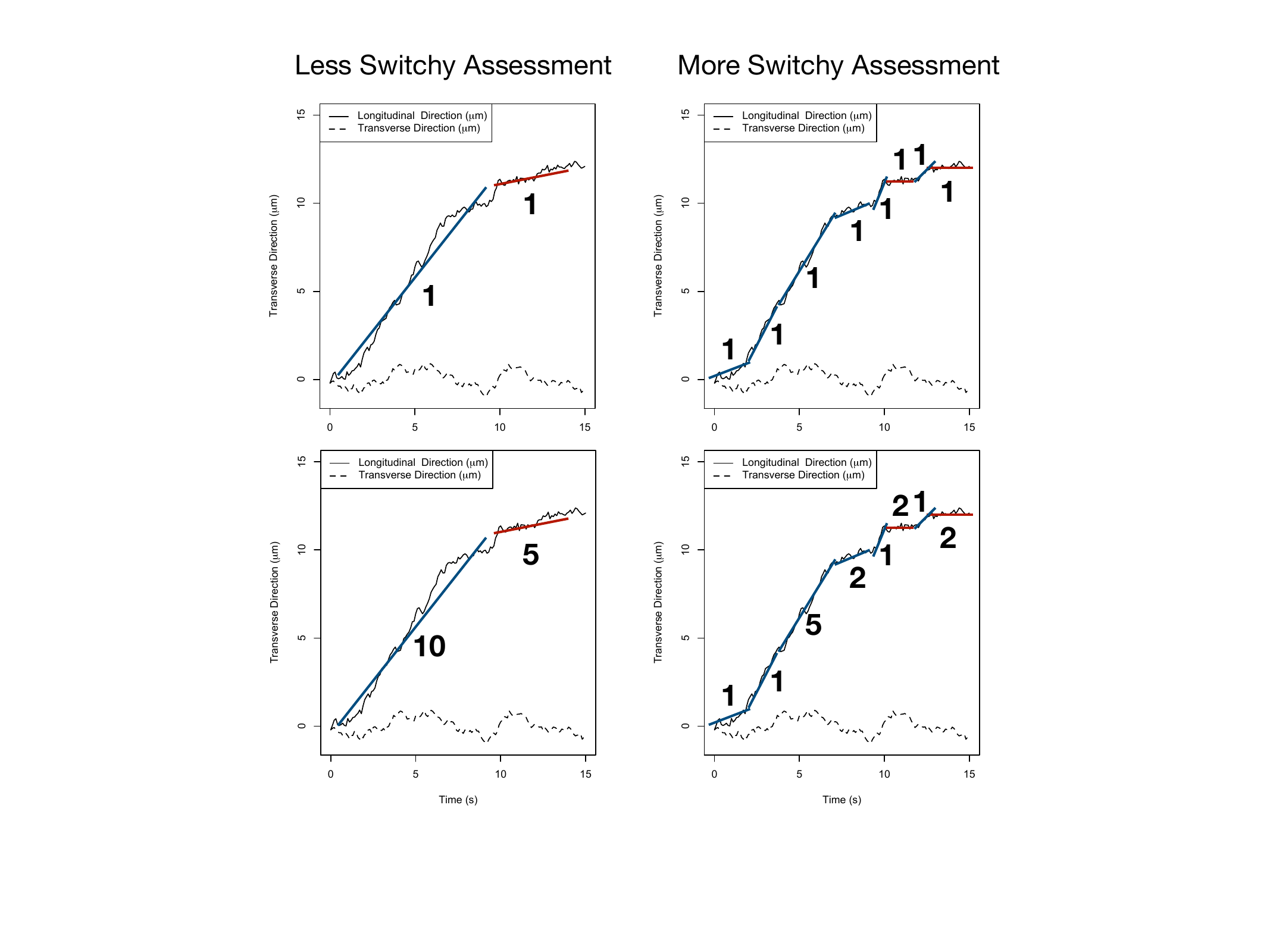}
    \caption{\textsc{Robustness with respect to segmentation.} In this figure we conduct a thought experiment concerning the method of summarizing segment properties. The displayed path is a quantum dot transported along a microtubule \emph{in vitro} \cite{JensenEtAl}. Left and center panels show possible segmentations by algorithms that have two different sensitivity levels. Blue segments are labeled \emph{Motile}, while red segments are labeled \emph{Stationary}. Theoretical summaries of an empirical CDF for speed $\widehat{F}(s)$ and the inferred CSA $\csahat(s)$ are presented in the table at right.}
    \label{fig:robustness}
\end{figure}

Consider the following thought experiment, depicted in Figure~\ref{fig:robustness}. An observed microparticle moves at roughly 1 micron per second for ten seconds and then is \emph{Stationary} for the next five seconds. A highly sensitive segmentation algorithm might decompose the path into numerous short segments, while a less sensitive algorithm might correctly assess the path as having two state segments. Define the empirical speed cumulative distribution function (CDF) as follows:
\begin{equation}
    \widehat{F}(s) := \frac{1}{K} \sum_{k = 1}^K \mathbb{1}_{[|\widehat{V}_k|,\infty)}(s). 
\end{equation}
From a modeling point of view, this is the eCDF of the distribution of anchor speeds assuming all states are Motile. We see in our thought experiment that the ratio of segments with speeds above 0.5 $\micron$/s to those below 0.5 $\micron$/s changes from 3:1 to 1:1 when we shift from a sensitive ``switchy'' assessment to a ``less switchy'' one. By contrast ratio of time spent above 0.5 $\micron$/s to time spent below 0.5 $\micron$/s is roughly 2:1 in both cases. 
\begin{figure}[h!]
    \centering
    \includegraphics[width=0.45\textwidth]{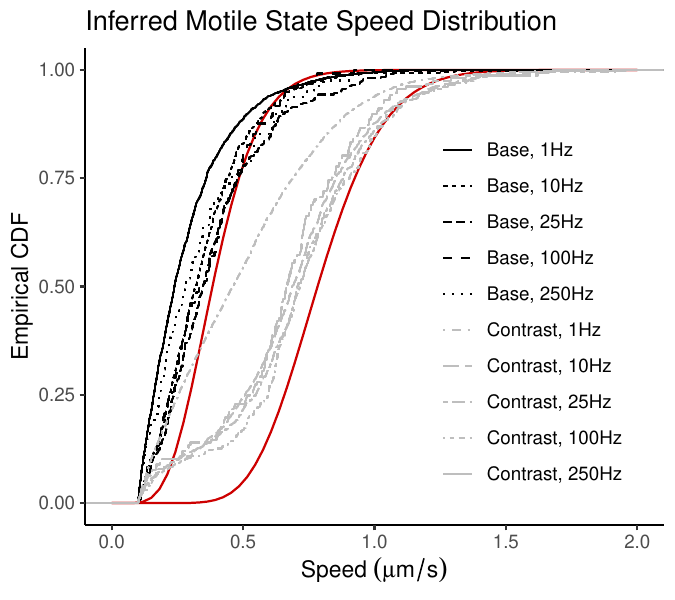}    
    \includegraphics[width=0.4\textwidth]{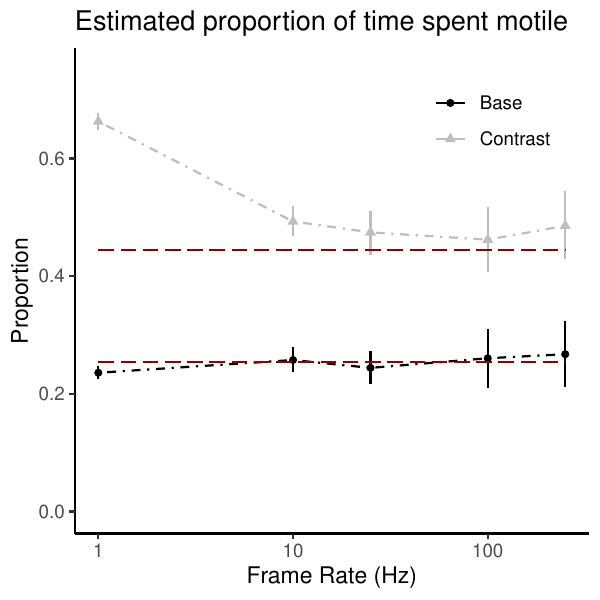}~
    \caption{\textsc{Simulated data, speed counts vs time allocation.} \textbf{(Left)} Empirical CDFs for segment speeds during inferred \emph{Motile} states. The red curves indicate the CDF used during simulations for two distinct parameter sets. Substantial disparities exist between the true and inferred CDFs for both parameter regimes across all frame rates. \textbf{(Right)} Estimates for proportion of time spent \emph{Motile} across multiple frame rates using a 100 nm/s for a \emph{Stationary}/\emph{Motile} threshold. We see our first instance a bias/variance trade-off in selecting frame rates.}
    \label{fig:count_vs_time}
\end{figure}

The difference in the reporting methods can also be observed in light of varying frame rates. Let $\widehat{F}_M(s)$ denote the empirical CDF of segment speeds conditioned on the event that the speed is greater than 0.1 $\micron$/s. We can think of this as the speed distribution of the \emph{Motile} state. In Figure~\ref{fig:count_vs_time} we display 
$\widehat{F}_M(s)$ for simulated ensembles of two parameter sets (see Table~\ref{tab:parameters}), observed at five different frame rates. The true \emph{Motile} state speed distributions are given in red. We see that there is a substantial gap regardless of the frame rate. By contrast, in the right panel, we display estimates for time spent above 0.1 $\micron$/s for the same simulated data sets and the same segmentation by CPLASS. For frame rates faster than 10 Hz, the true proportion falls within 95\% bootstrap confidence interval. 

We do note a first occurrence of a bias-variance tradeoff in the segmentation analysis though. There exists substantial bias in the slow frame rate toward overestimation of time spent in the Motile state. Meanwhile, at faster frame rates, the bootstrap confidence interval (computed by resampling paths with replacement) is consistently larger. We explore the causes of each in the next section.

\subsubsection{Quantifying the Inference Gap}
\label{sec:classify:inference-gap}

For simulated data, we can directly assess what proportion of the time the inference protocol yields mislabeled states. We call the percentage of mislabeled time steps the \emph{Inference Gap}. To be precise, from the inferred segment triplets $(\widehat{J}_k, \widehat{\tau}_k, \widehat{V}_k)$ we can assign a triplet to every observation time $\big(\widehat{J}(t_i), \widehat{\tau}(t_i), \widehat{V}(t_i)\big)$. If we set $s_*$ be a threshold speed above which a segment is labeled \emph{Motile}, then we can define the Inference Gap of a single path to be 
\begin{displaymath}
    \text{Inference gap (Pathwise)} := \frac{100}{n} \sum_{i = 1}^n 1\{\widehat{J}(t_i) = J(t_i)\}.
\end{displaymath}
\begin{figure}[h!]
    \centering
    \includegraphics[width=0.34\textwidth]{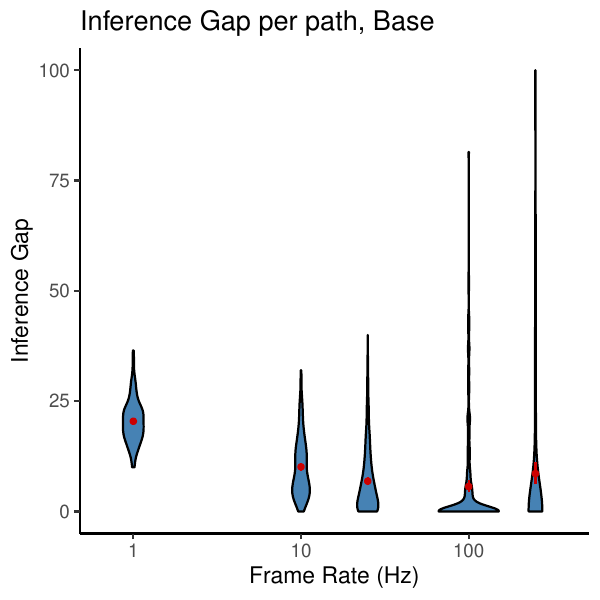}~
    \includegraphics[width=0.3\textwidth,
    trim={1.2cm 0 0 0},clip]{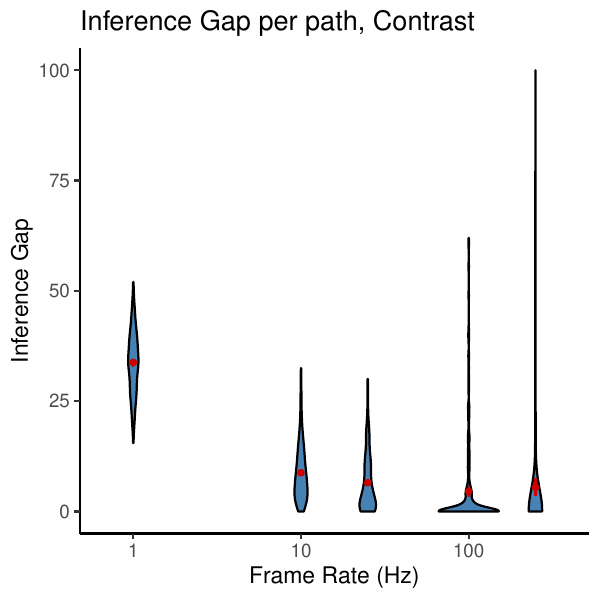}~
    \includegraphics[width=0.34\textwidth]{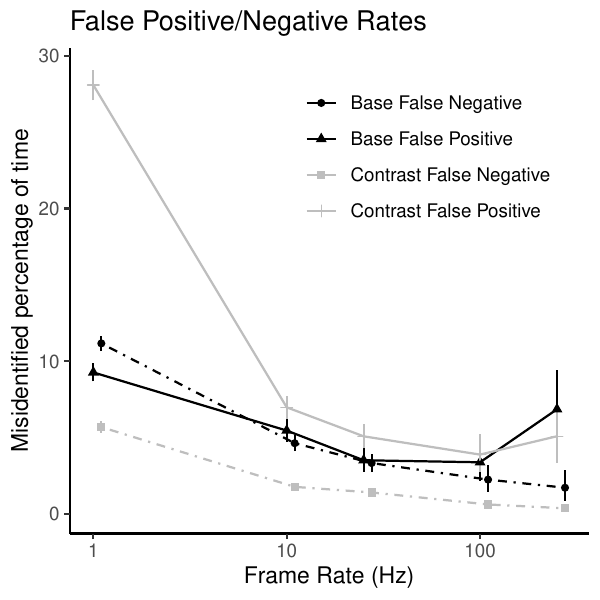}    
    \caption{\textsc{Inference Gaps} Simulated data sets: Base (\textbf{Left}) and Contrast (\textbf{Center}). The path-by-path distribution of the percentage of time spent misidentified in the \emph{Stationary}/\emph{Motile} dichotomy with speed threshold $s_* = 0.1 \micron$/s. \textbf{(Right)} Ensemble averages for the percentage of mislabeled time steps, broken down in terms of False Positives and False Negatives. Error bars computed from bootstrap resampling of paths. }
    \label{fig:inf-gap}
\end{figure}

In Figure~\ref{fig:inf-gap}, we display a summary of the segmentation/classification protocol applied to simulations of the base parameters set observed at different frame rates. For both the Base and Contrast data sets, the average pathwise Inference Gap (red dots) decreases as the frame rate increases from 1 Hz to 100Hz. For both parameter sets, the average inference gap is larger at 250 Hz than at 100 Hz, indicating some optimal frame rate in between (if minimal inference gap is the objective). 

Consistent with the right panel of Figure~\ref{fig:count_vs_time}, the variance of the pathwise Inference Gaps is increasing. This is because of the fixed number of observations across all frame rates. Since individual segments account for larger proportions of the particle paths, a single mislabeled segment can result in a path being mislabeled for a large percentage of its time. For example, in an ensemble of paths observed with a 250 Hz frame rate, it is typical for at least one path to be entirely mislabeled. This happens when a path consisted of just one segment, and that segment has a velocity close to the threshold. A typical error in velocity estimation can lead to mislabeling, which would result in an Inference Gap of 100 for that path. On the other hand, at 25 Hz and 100 Hz, the vast majority of mislabeling mistakes are brief and the average Inference Gap is at its lowest.

The Inference Gap statistic can be seen as the sum of the two canonical types of statistical error: \emph{false positive time points}, which the segmentation-and-classification algorithm has labeled \emph{Motile} when they are truly \emph{Stationary}, and \emph{false negative time points}, which have been labeled \emph{Stationary} when they truly \emph{Motile}. This breakdown is informative when we see phenomena like the Inference Gap being twice as large for the Contrast parameter set compared to the Base parameter set, when both are observed at 1 Hz. In the right-hand panel of Figure~\ref{fig:inf-gap}, we see that the culprit is a high rate of false positives. False positive time can occur is there is a short Motile segment that interrupts a period of stationarity. The displacement that occurs can cause an entire one-second time step to be labeled Motile, when in fact the majority of the time step was Stationary. There is also a natural way for false negatives to occur: through reversals. During a one-second time step, a motor might be \emph{Motile} in one direction half of the time step, and then \emph{Motile} in the other direction for the other half-step. The total displacement would be negligible, inferred as a period of stationarity, but in truth the motor was \emph{Motile} the full time step. 

A close inspection of false positive and false negative rates reveals a spectrum of ways that segmentation/classification algorithms will inevitably mislabel motor behavior. For real motors, there may be incorrect labeling because the velocities of segments are not as constant as the model would present them to be. A piecewise constant model would fail to capture continuously varying speeds. This is particularly an issue for small frame rates (large time steps) when displacements are emerging from a sum of (possibly) varying velocities. On the other hand,  high frame rate observations can lead to the algorithm ``chasing noise'' instead of true state signals. This is due to the relative size of fluctuations we compared to the smaller true displacement that occurs during time steps. This leads to false positive time points where 1) the segmentation algorithm associates the large amounts of noise with significant changes between too many consecutive short segments or 2) segments with large jumps in the noise between the  changepoints are averaged together. 

\subsection{Cumulative Speed Allocation (CSA, $\Psi(s)$): Theory \label{sec:CSA:theory}}

All of the analysis in the preceding section was presented for a speed threshold $s_* = 0.1 \micron$/s. But much of the same logic applies for other choices of the threshold. Rather than pick a handful of thresholds and consider them individually, the Cumulative Speed Allocation (CSA) essentially makes the \emph{Stationary}/\emph{Motile} assessment across all choices of $s_*$ simultaneously. But more than being just a generalization of \emph{Stationary}/\emph{Motile} classification, the application of Renewal-Reward theory \cite{ciocanel2020renewal} permits a predictive analysis of the CSA for a given parameter set. In this section we present our main mathematical result, which is the theoretical CSA for trajectories generated by the model described in Section~\ref{sec:model}.

As a reminder, we write trajectory in terms of a sequence of anchor states $\{(J_i, \tau_i, V_i)\}$ and the associated continuous-time anchor position, velocity, and state: $\{J(t), A(t), V(t)\}_{t \geq 0}$, where $A$ and $V$ are understood to be vectors at each time point. For every speed $s \geq 0$, we define the theoretical CSA as follows:
\begin{equation}
    \Psi(s) = \lim_{t \to \infty} P(|V(t)| < s).
\end{equation}
In other words, it is the long-term (or, informally, ``steady-state'') probability that a motor is in transport with a speed less than a given value $s$. By ergodicity, we can equivalently define the CSA in terms of a long-term running average:
\begin{equation}
    \Psi(s) = \lim_{T \to \infty} \frac{1}{T} \int_0^T \!\! \mathbbm{1}_{|V(t)| < s} \d t.
\end{equation}
This time-average perspective amenable to analysis via Renewal-Reward theory. The key insight is to decompose the trajectory into regenerative cycles that are independent and identically distributed, and then apply a functional central limit theorem \cite{ciocanel2020renewal,serfozo2009basics}. 

\begin{theorem}[Cumulative Speed Allocation]
\label{thm:csa}
    Let a sequence of state-duration-velocity triplets be defined according to the model framework of Section~\ref{sec:model}, and let $\{J(t), A(t), V(t)\}$ be the associated continuous-time state-position-velocity triplet. Define 
    \begin{equation}
        \rho = \frac{\sigma q \tilde{\alpha}}{\beta \bar{D}}
    \end{equation}
    where
    \begin{equation}
        \tilde{\alpha} = \left\{ \begin{array}{cl}
           \displaystyle \alpha - 1,  & \textrm{Dependent segment speed and duration;} \\
           \displaystyle \alpha  & \textrm{Independent segment speed and duration.}
        \end{array} \right.
    \end{equation}
    Then
    \begin{equation}
        \Psi(s) = \frac{\Gamma_c(s \with \tilde \alpha, \beta) + \rho}{1 + \rho}.
    \end{equation}
    where $\Gamma_c(\cdot \with \alpha, \beta)$ is the CDF of the anchor speed distribution, which is assumed to be $\text{Gamma}(\alpha, \beta)$. 
\end{theorem}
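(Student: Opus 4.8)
The plan is to treat the speed time-series $t \mapsto |V(t)|$ as a reward attached to a regenerative process and apply the Renewal-Reward theorem, exactly as signposted before the statement. Since $|V(t)|$ is constant on each anchor segment, equalling $0$ on Stationary segments and the segment speed $S \sim \mathrm{Gamma}(\alpha,\beta)$ on Motile segments, the indicator $\mathbbm{1}_{|V(t)| < s}$ is itself piecewise-constant across segments (direction is irrelevant for the CSA). Because $p = 1$ in every parameter set of Table~\ref{tab:parameters}, the embedded state chain alternates as one Stationary segment followed by a $\mathrm{Geometric}(q)$ number of Motile segments before returning to Stationary; the epochs at which a Stationary segment begins are therefore regeneration times, and the blocks between them are i.i.d.\ by the Markov property of $\{J_k\}$ together with the conditional independence of the $(\tau_k, V_k)$ given the states. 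Renewal-Reward in its time-average form \cite{ciocanel2020renewal,serfozo2009basics} then gives
\begin{equation*}
    \Psi(s) = \frac{E[\,\text{time with } |V| < s \text{ in one cycle}\,]}{E[\,\text{cycle length}\,]}.
\end{equation*}

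Next I would evaluate the two expectations. For the denominator, the mean cycle length is $\sigma + q^{-1} m_1$, where $\sigma$ is the mean Stationary dwell time and $m_1 = E[\tau \mid J = 1]$ is the mean Motile dwell time: in the Dependent model $E[\tau\mid S] = \bar D / S$ so $m_1 = \bar D\, E[1/S] = \bar D \beta/(\alpha-1)$, while in the Independent model $m_1 = \bar D \beta/\alpha$; both read $m_1 = \bar D \beta / \tilde\alpha$. For the numerator, a Stationary segment always satisfies $|V| = 0 < s$ and so contributes its full mean duration $\sigma$, whereas each Motile segment contributes $E[\tau\, \mathbbm{1}_{S < s}]$. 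The crux is this last quantity, and it is where the shifted shape parameter $\tilde\alpha$ appears.

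The main obstacle --- really the one genuinely interesting computation --- is the length-biasing of the speed distribution. In the Dependent model,
\begin{equation*}
    E[\tau\, \mathbbm{1}_{S<s}] = E\!\left[\frac{\bar D}{S}\,\mathbbm{1}_{S<s}\right] = \bar D \int_0^s \frac{1}{u}\,\frac{\beta^\alpha}{\Gamma(\alpha)}\,u^{\alpha-1} e^{-\beta u}\,\d u,
\end{equation*}
and the integrand is proportional to the $\mathrm{Gamma}(\alpha-1,\beta)$ density: weighting each speed by its expected sojourn time $\propto 1/u$ cancels one power of $u$, so the time-weighted speed law is $\mathrm{Gamma}(\alpha-1,\beta)$ rather than $\mathrm{Gamma}(\alpha,\beta)$. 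Carrying out the normalization via $\Gamma(\alpha) = (\alpha-1)\Gamma(\alpha-1)$ yields $E[\tau\,\mathbbm{1}_{S<s}] = m_1\,\Gamma_c(s \with \tilde\alpha,\beta)$ with $\tilde\alpha = \alpha - 1$. In the Independent model $E[\tau\mid S]$ does not depend on $S$, no rescaling occurs, and the same identity holds with $\tilde\alpha = \alpha$. This is the step to handle carefully: it requires $\alpha > 1$ in the Dependent case so that $E[1/S] < \infty$ and $\mathrm{Gamma}(\alpha-1,\beta)$ is a genuine distribution --- a hypothesis worth stating explicitly, since e.g.\ the Mimic set has $\alpha < 1$ and must therefore use the Independent model.

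Finally I would assemble the pieces: the numerator is $\sigma + q^{-1} m_1\,\Gamma_c(s\with\tilde\alpha,\beta)$ and the denominator is $\sigma + q^{-1}m_1$, so dividing both through by $q^{-1}m_1 = \bar D\beta/(q\tilde\alpha)$ turns the constant terms into $\rho = \sigma q\tilde\alpha/(\beta\bar D)$ and gives $\Psi(s) = (\Gamma_c(s\with\tilde\alpha,\beta) + \rho)/(1+\rho)$, as claimed. As sanity checks I would verify $\Psi(s)\to 1$ as $s\to\infty$, that $\Psi(0^+) = \rho/(1+\rho)$ recovers the long-run Stationary fraction, and (running the same argument for general $p$) that one obtains $\rho/p$ in place of $\rho$, which collapses to the stated formula precisely when $p=1$.
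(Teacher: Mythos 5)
Your proof is correct and follows essentially the same route as the paper's: Renewal--Reward with regeneration at entries into the \emph{Stationary} state, the geometric number of consecutive \emph{Motile} segments supplying the $1/q$ factor, and the key length-biasing computation showing that time-weighting shifts the gamma shape from $\alpha$ to $\alpha-1$ in the dependent model. If anything, your writeup is tighter than the paper's own proof, which contains algebraic typos in its final integrals and stops before assembling the ratio into the stated formula; your explicit $\alpha > 1$ hypothesis for the dependent case and the observation that general $p$ would replace $\rho$ by $\rho/p$ are both worthwhile additions.
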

\begin{proof}
If we define the random variable $\Delta T$ to represent the random time between regenerations and $\Delta I_s$ to be the time spent during a cycle with speed less than $s$, then the theory presented in \cite{ciocanel2020renewal} implies that
\begin{equation}
\Psi(s) = \frac{E(\Delta I_s)}{E(\Delta T)}.
\end{equation}

We first consider $E(\Delta T)$, where $\Delta T$ is the regeneration time for the moments when we enter the \emph{Stationary} state. That is to say, $\Delta T = \tau_1 + \tau_2$ if the sequence of states is \emph{Stationary}, \emph{Motile}, \emph{Stationary} and $\Delta T = \tau_1 + \tau_2 + \tau_3$ if the sequence of states is \emph{Stationary}, \emph{Motile}, \emph{Motile}, \emph{Stationary}, etc. Recall from Table~\ref{tab:parameters} that $\sigma$ is duration of a \emph{Stationary} state, and let $\bar \tau$ denote the expected duration of a \emph{Motile} state. Since the number of consecutive \emph{Motile} states before returning to \emph{Stationary} is geometrically distributed with a success probability $q$ (recall Table \ref{tab:parameters}) the expected regeneration duration is 
\begin{displaymath}
    E(\Delta T) = \sigma + \frac{\bar \tau}{q}
\end{displaymath}
where $\sigma = E(\tau_i | J_i = 0)$ and $\bar \tau = E(\tau_i | J_i = 1)$. The expectation of the duration depends on which model assumption we are using. If $\tau$ depends on the segment speed $S$, then (assuming $\alpha > 1$ and recalling from Table~\ref{tab:parameters} that $\bar D$ is the average distance traveled per segment), we have that
\begin{equation}
    \Bar{\tau} = \int_{0}^{\infty}E(\tau|s)p(s)\d s = \int_{0}^{\infty} \frac{\bar D}{s} \frac{\beta^\alpha}{\Gamma(\alpha)}s^{\alpha-1}e^{-\beta\nu}\d s = \frac{\beta \bar D}{\alpha-1},
\end{equation}
If $\tau$ is independent of the speed, 
\begin{displaymath}
    \Bar{\tau} = \frac{\beta \Bar{D}}{\alpha}.
\end{displaymath}
 
This shows that if we choose a motor's speed to have a shape parameter $\alpha \leq 1$, then we need to choose the duration independent of speed. Otherwise, the motor would spend essentially all of its time (asymptotically) in near-\emph{Stationary} states (\emph{Motile} with very small velocity).

Now, $\Delta I_s$, the time spent during a cycle with a speed less than $s$ $\mu \text{m}/\text{sec}$, can be written 
\begin{displaymath}
  \Delta I_s = \tau_1 + \sum \tau_i \mathbbm{1}{ \{ |V_i| \leq s\}}.
\end{displaymath} 
Taking expectations, we must compute $E\big(\tau_i \mathbbm{1}{ \{ \nu_i \leq \nu^*\}}\big)$ under the two assumptions concerning the dependence of segment duration on speed.

Let $E_j(X) = E(X \given J=j)$ denote that we are taking the expected value of a segment quantity conditioned on the state of the particle during that segment, recalling that $J = 0$ indicates a \emph{Stationary} motor and $J = 1$ indicates a \emph{Motile} motor.

If $\tau$ is independent of the speed $|V|$, then 
\begin{displaymath}
    \begin{aligned}
        E_1(\tau \mathbbm{1}{ \{ |V| < s\}}) &= E_1(\tau) \, E_1(\mathbbm{1} \{ |V| \leq s\}) \\
        &= \frac{\beta \Bar{D}}{\alpha} P_1 (|V| \leq s) \\
        &= \frac{\beta \Bar{D}}{\alpha}\int_0^{s}\frac{\beta^\alpha}{\Gamma(\alpha)}\nu^{\alpha -1}e^{-\beta \nu}d\nu \\
        &= \frac{\beta \Bar{D}}{\alpha - 1} \Gamma_c(s \with \alpha, \beta).
    \end{aligned}
\end{displaymath}

If $\tau$ depends on the speed, then
\begin{displaymath}
\begin{aligned}
    E_1(\tau \mathbbm{1}{ \{ |V| < s\}}) &= E_1(\tau \mathbbm{1}{ \{ |V| < s\}}) \\
    &= \int_0^\infty E_1(\tau \mathbbm{1}{ \{ |V| < s\}} \given |V| = \nu) \gamma(\nu \with \alpha, \beta) d\nu \\
    &= \int^{\infty}_0 E_1(\tau \given |V| = \nu)\frac{\beta^\alpha}{\Gamma(\alpha)}\nu^{\alpha-1}e^{-\beta \nu}d\nu \\
    &= \int^{s}_0 \frac{\Bar{D}}{\nu} \frac{\beta^\alpha}{\Gamma(\alpha)}\nu^{\alpha - 1} e^{-\beta \nu}d\nu \\
    &= \Bar{D} \beta \frac{\Gamma(\alpha - 1)}{\Gamma(\alpha)} \int^{s}_0\frac{\beta^{\alpha - 1}}{\Gamma(\alpha - 1)}\nu^{(\alpha + 1)- 1} e^{-\beta \nu}d\nu \\
    &= \frac{\bar{D} \beta}{\alpha} \gamma(s \with \alpha + 1, \beta)
\end{aligned}
\end{displaymath}

\end{proof}

\subsection{Cumulative Speed Allocation (CSA, $\Psi(s)$): Inference \label{sec:CSA:inference}}

With the main CSA theorem in hand, we can assess the quality of this summary statistic from multiple perspectives on performance: (1) whether we can infer model parameters from an experimental data set and then use the parameters to create simulations that faithfully reproduce CSA curves; (2) assessing the quality of uncertainty quantification in service of distinguishing between ensembles generated from different parameter sets, and (3) assessing robustness with respect to frame rate of observation.
\begin{figure}[h]
    \centering
    \includegraphics[width=0.75\linewidth]{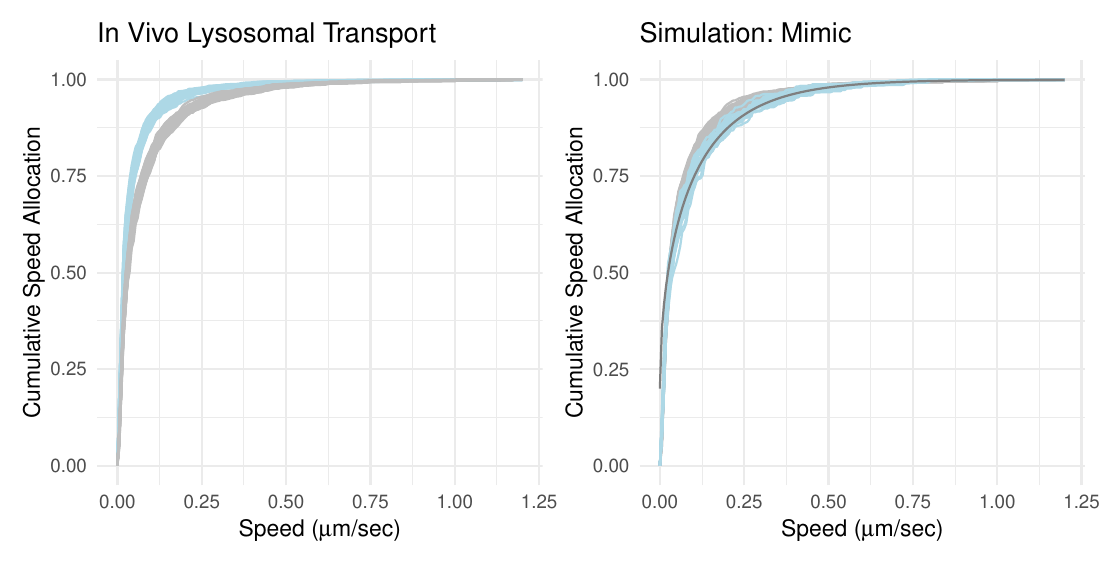}
    \caption{\textsc{Cumulative Speed Allocation} \textbf{(Left.)} CSA curves inferred for lysosomal transport in the perinuclear region (light blue) and periphery (gray) of cells observed by Rayens et al \cite{RayensEtAl}. The significance of the difference is illustrated by the distribution of CSA curves computed for bootstrap resamples of the respective trajectory ensembles. \textbf{(Right)} Theoretical CSA, and CSA curves of bootstrap resampled trajectory ensembles, for simulated motor-cargo complexes generated from the Mimic parameter set and observed at 20 Hz (matching the lysosomal data). Gray curves are the target distribution of CSA curves, corresponding to lysosomes in the periphery of observed cells.}
    \label{fig:CSAData}
\end{figure}

\subsubsection{Toward parameter inference and faithful simulation}
The first of these concerns is addressed briefly in Figure~\ref{fig:CSAData} and will be more addressed more systematically elsewhere. In the left panel of Figure~\ref{fig:CSAData}, we display the result of applying the CPLASS segmentation algorithm to 200 trajectories randomly chosen from the perinuclear region and periphery of cells studied by Rayens et al.~\cite{RayensEtAl}. Each member of the CSA curve ensembles -- (light blue for the perinuclear region and gray for the cell periphery) -- is the inferred CSA calculated from a bootstrap resampling of the 200 paths. The clear gap between the CSA ensembles shows that the CSA is able to distinguish paths in the two regions, and their primary difference is in the proportion of time that their lysosomes are moving at speeds 0.5 $\micron$/s or slower. This is consistent with the conclusions of this and the follow-up effort \cite{RayensEtAl2} that lyosomes near the endoplasmic reticulum spend significantly more time in \emph{Stationary} or near-\emph{Stationary} states.

In order for a summary statistic to move beyond simply being descriptive, it must be possible to build a stochastic model that will generate simulated ensembles for which the application of CPLASS will result in similar CSA curves as the target data. In Figure~\ref{fig:CSAData} we display the results of a proof-of-concept numerical experiment. Using \emph{ad hoc} methods, we found the collection of parameters in the Mimic parameter set (Table~\ref{tab:parameters}) tracked reasonably with peripheral lysosomes from the Rayens et al.~study. The dark gray curve shows is the theoretical CSA for the Mimic parameter set, calculated using Theorem~\ref{thm:csa}, and the overlap of the bootstrapped CSA ensemble (light blue) shows that these paths are not significantly different from the periphery lysosomes from a CSA-perspective. (As we note also in the Discussion, there are a host of other properties that we do not attempt to match here. In particular, we leave model selection and precise inference of the joint distribution of consecutive segment velocity directions for future work.)

\subsubsection{Grappling with small-speed statistical artifacts}
In the preceding section, we analyzed what we called the Base and Contrast parameter sets. These are distinguished from the Mimic data set in that there is a clear separation between \emph{Stationary} and \emph{Motile} states. This is common, for example, when studying \emph{in vitro} data, particularly for cargo transported by kinesin-family motors. The separation between \emph{Stationary} and \emph{Motile} states is in evidence in Figure~\ref{fig:CSABaseContrast}. Both the theoretical CSA curves (dark gray) and the bootstrapped ensembles of CSA curves (orange for Base, and yellow for Contrast) are essentially flat for small speeds. This means that \emph{Motile} segments that are less than 0.25 $\micron$/s (Base) or 0.5 $\micron$/s (Contrast) rarely appear. The gray curves showing the bootstrapped CSA-ensemble for peripheral lysosomes shows no such absence of slow \emph{Motile} segments.
\begin{figure}[H]
    \centering
    \includegraphics[width=0.85\linewidth]{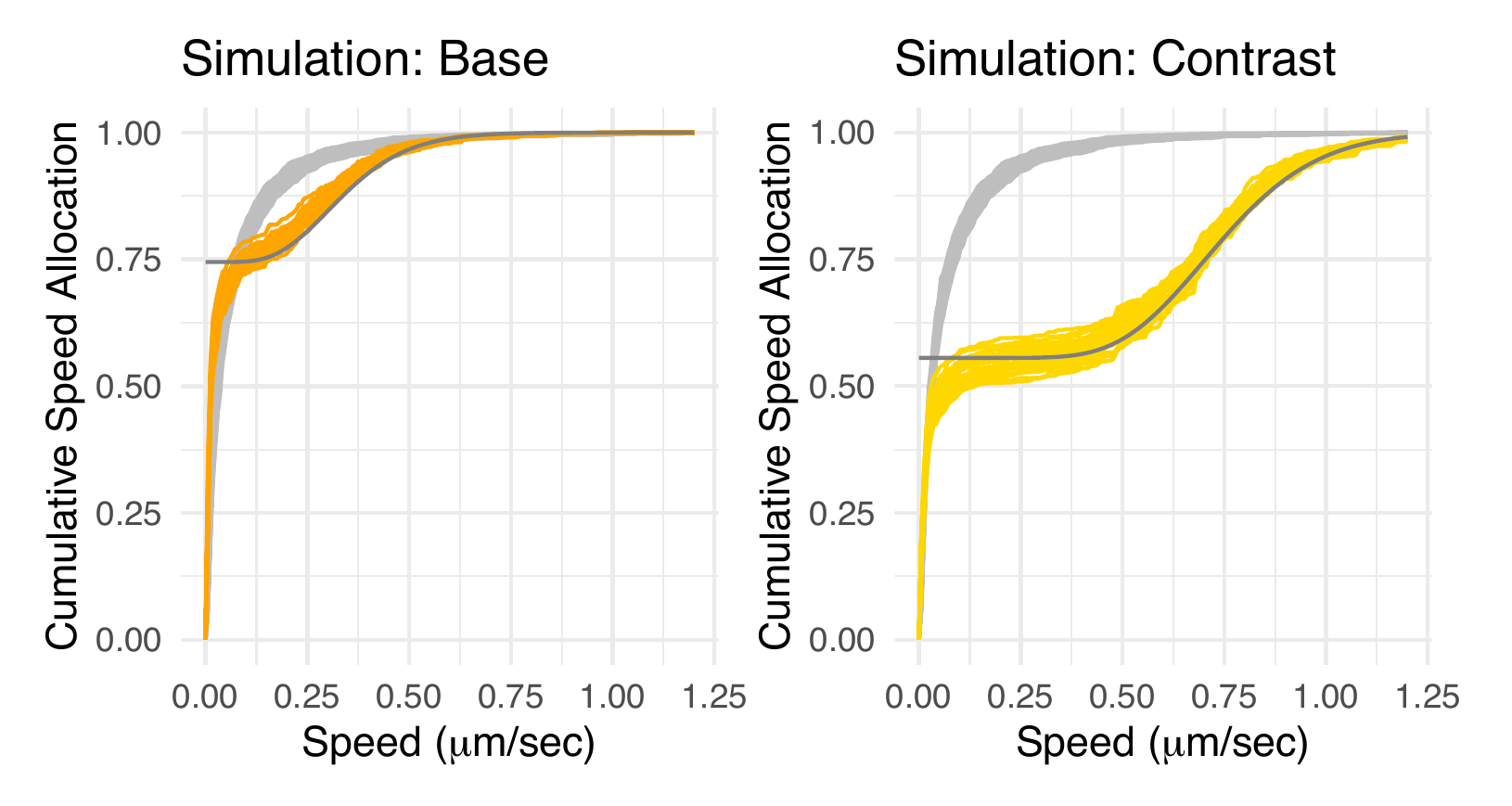}
    \caption{\textsc{Base and Contrast simulations}. CSA computation for a set of 250 simulated trajectories each of the parameter sets in Table \ref{tab:parameters}, observed at 25 Hz. The \textit{dark gray} line denotes the theoretical CSA of for each parameter set and the \textit{gray} curves are the computed CSA for experimental lysosome trajectories in the periphery region of the cell from \cite{RayensEtAl}. The \textit{orange} and \textit{yellow} curves are CSA curves for bootstrapped subsamples of the respective path ensembles.}
    \label{fig:CSABaseContrast}
\end{figure}
We do note that there is an overlap of all CSA curves for very small velocities. This is a statistical artifact that will arise from most segmentation algorithms, and is important to understand when developing theoretical models. The CPLASS algorithm returns a set of inferred segments and assigns a maximum likelihood (MLE) speed to each of them. The MLE speed is never zero, so even if a segment has exactly zero motor velocity, the observational noise will produce a non-zero MLE speed. Our focus on these two parameter sets was motivated in part by a desire to understand the difference between the theoretical CSA and the inferred CSA for motors with clear \emph{Motile}/\emph{Stationary} state separation. The clear difference between the simulated sets and the lysosomal CSA curves indicates that the distribution of speeds \emph{in vivo} genuinely fill out a full speed distribution from very slow to roughly 0.6 $\micron$/s.

\subsubsection{Bias vs.~variance again}
In Figure~\ref{fig:CSASims}, we investigate one more instance of the Bias-Variance tradeoff that has been evident throughout our study. Using the Base parameter set we simulated paths to be observed at 1 Hz, 25 Hz and 250 Hz, with 250 trajectories in each ensemble. To emphasize the degree of variation, for each frame rate, we created 100 subsample-ensembles consisting of 50 paths each. For each subsample, we calculated a CSA given the exact motor (anchor) position at all times (blue curves), and a CSA based on particle locations and the application of the CPLASS algorithm (green curves). 
\begin{figure}[H]
    \centering
    \includegraphics[width=1\linewidth]{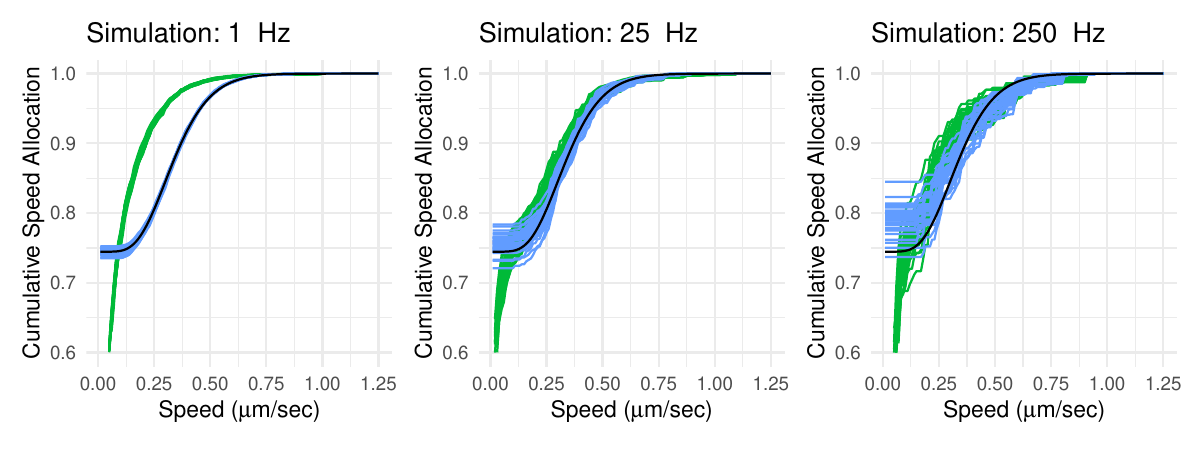}
    \caption{\textsc{CSA Bias and Variance.} CSA ensembles generated from different Base parameter set in Table \ref{tab:parameters}. We simulated three sets of trajectories at the frame rates of 1 Hz, 25 Hz, and 250 Hz. The \emph{black} line in each plot denotes the theoretical CSA computed using the derivation in Section \ref{sec:CSA:theory}. The \emph{green} lines in each plot denote the inferred CSA computations for each set of simulated trajectories. The \emph{blue} lines denote CSA bootstrap samples of the true simulated trajectories.}
    \label{fig:CSASims}
\end{figure}

Calculating CSA curves from the true motor position allows us to distinguish two sources of error in CSA estimation. First, due to the fixed number of observations across different frame rates, there is less parameter-related information contained in high frame rate date. Paths observed at 250Hz contain relatively few switches, which yields fewer speed observations and less information about average segment durations. Note, moving right to left in Figure~\ref{fig:CSASims}, the blue curves become more concentrated around the true CSA as paths become longer. This is because the anchors explore a full and representative set of behaviors in the time provided during the data set. However, observation of the anchor behavior is compromised at 1 Hz. For reasons detailed in Section~\ref{sec:classify:inference-gap}, there is a substantial bias toward underestimating speeds in 1 Hz data, and the CSA is biased toward incorrectly large values. On the other hand, inference on 100Hz and 150Hz data is very good. The CSA curves inferred from paths closely mirror the information provided by the anchor positions. The variation that is present is due to a lack of ground truth to work with. This is the root of the the bias-variance tradeoff first observed for \emph{Stationary}/\emph{Motile} classification in Figure~\ref{fig:count_vs_time}.

\subsection{Mean-Squared Displacement (MSD)}
\label{sec:MSD}

The analysis would not be complete without addressing the implications of choosing different frame rates on the evaluation of MSD. While we do not advocate for using MSD as a general tool, it remains an effective method for communicating information about the ``transport scale'' of intracellular cargo movement. For paths that switch between stationary and ballistic states, the MSD curve should display a transition: for small lags, the MSD is effectively constant, dominated by stationary segments that have constant variance; and then there should be a turnover to a linear phase with a slope that equals the effective diffusivity, an averaging of the two states \cite{pavliotis2014stochastic}. 

\begin{figure*}[h]
    \centering
    \begin{subfigure}[t]{0.35\textwidth}
        \centering
        \includegraphics[width=1\linewidth]{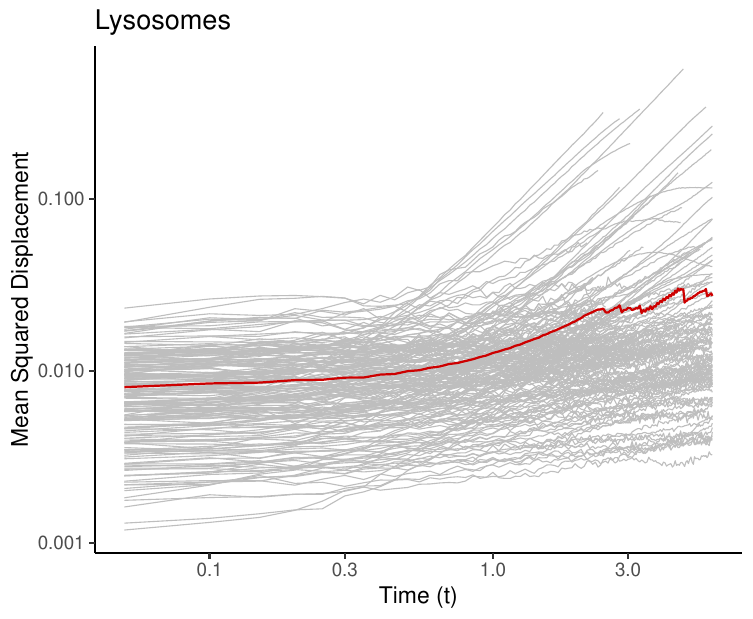}
        \caption{Experimental MSD}
    \end{subfigure}%
    ~     
   \begin{subfigure}[t]{0.35\textwidth}
        \centering
        \includegraphics[width=1\linewidth]{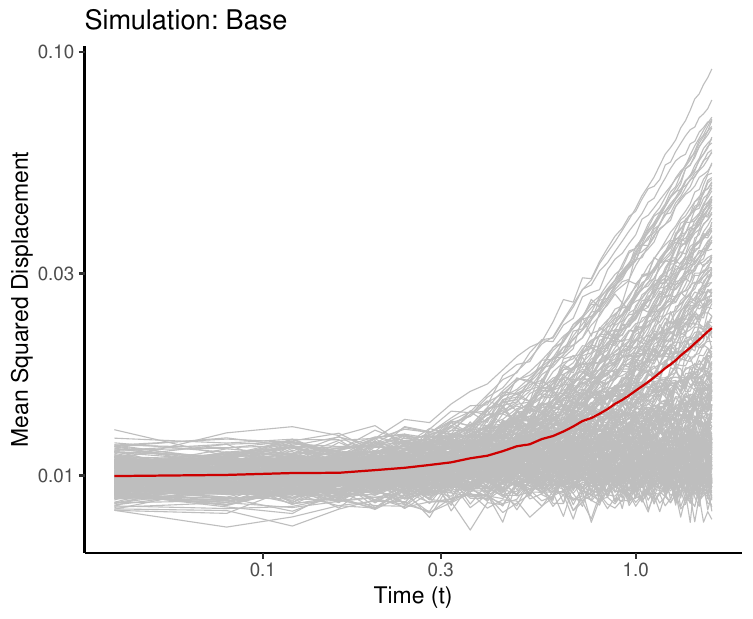}
        \caption{25Hz MSD}
    \end{subfigure}%
    
    \begin{subfigure}[t]{0.35\textwidth}
        \centering
        \includegraphics[width=1\linewidth]{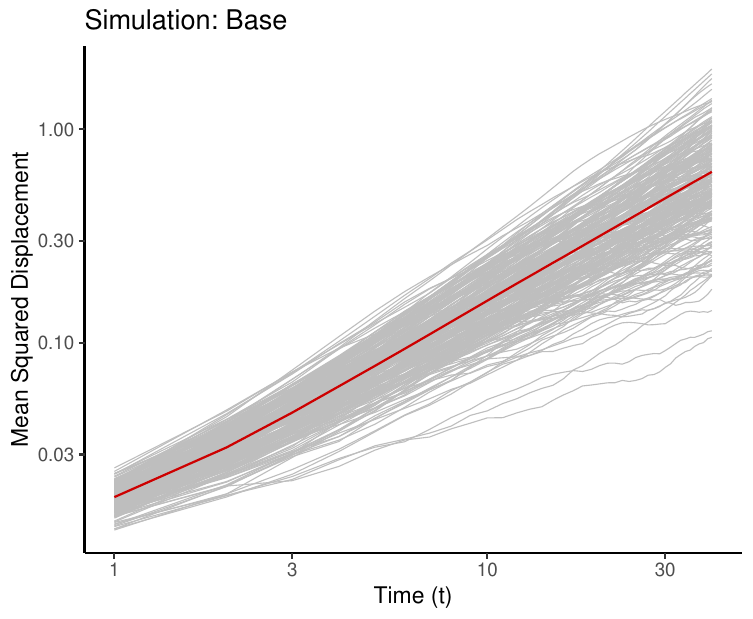}
        \caption{1Hz MSD}
    \end{subfigure}%
    ~
    \begin{subfigure}[t]{0.35\textwidth}
        \centering
        \includegraphics[width=1\linewidth]{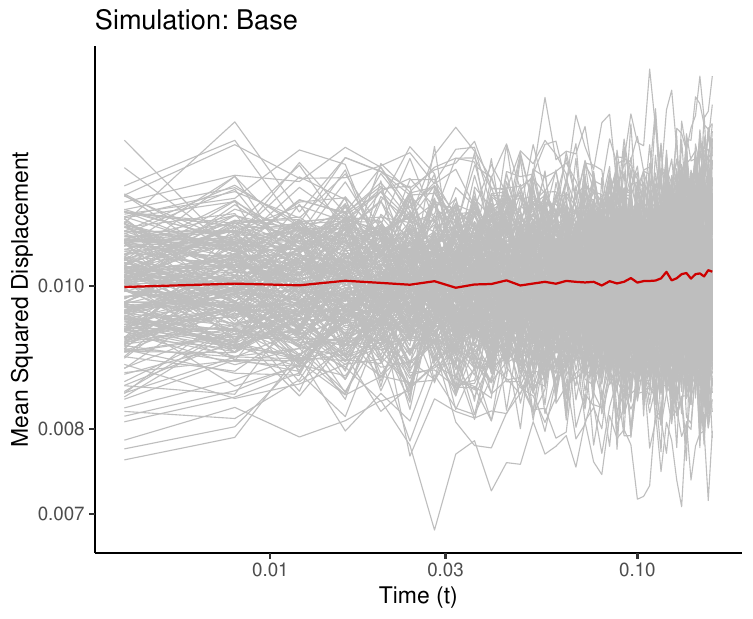}
        \caption{250 Hz MSD }
    \end{subfigure}
    \caption{The pathwise MSD of 250 simulated trajectories using the Base parameter set in Table \ref{tab:parameters} for frame rates 1 Hz, 25 Hz, and 250 Hz. Subfigure (a) contains the MSD analysis of the experimental lysosome trajectories in the periphery from \cite{RayensEtAl}. The \textit{gray} curves denotes the pathwise MSDs and the \textit{red} curve denotes the ensemble average MSD.
    \label{fig:msd}}
\end{figure*}

From a modeling perspective, there are multiple approaches to computing effective diffusivity \cite{brooks1999probabilistic,hughes2011matrix,krishnan2011renewal,BressloffNewby,popovic2011stochastic,ciocanel2020renewal}. But many of these concepts do not translate into methods for statistical inference. Consider, for example, the ensembles of MSDs in Figure \ref{fig:msd} that show the intrinsic difficulty in estimating effective diffusivity by way of pathwise MSDs. Depending on the frame rate, a transition to the final slope may not be evident. Only in the 1 Hz simulation data do we see the asymptotically linear state. However, the individual path MSDs are noisy, and there are ambiguities in what range of time lags to use for a linear fit.

\begin{figure}[h!]
    \centering
    \includegraphics[width=0.6\linewidth]{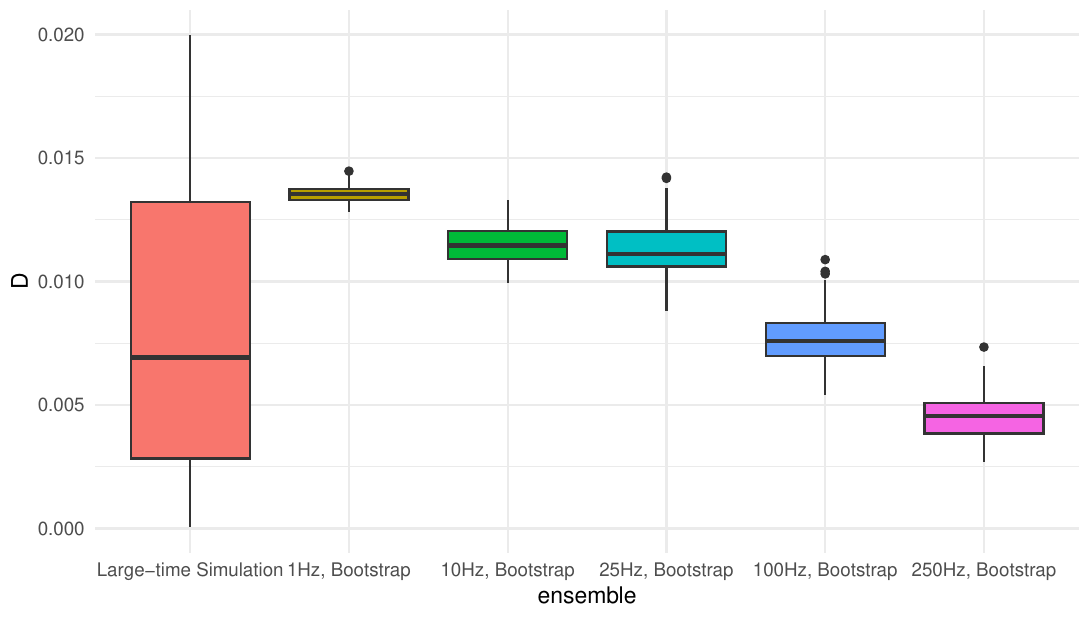}
    \caption{\textsc{Diffusivity Comparisons.} Estimated diffusivity of 250 simulated trajectories using the Base parameter set in Table \ref{tab:parameters} at frame rates of 1 Hz, 10 Hz, 25 Hz, 100 Hz, and 250 Hz. The box plots show the variance in the estimated diffustivities of the trajectories simulated at various frame rates. The Large-time Simulations estimate was generated by simulating 250 anchor-position paths for 10,000 seconds and computed the squared displacement divided by the total path duration. The takeaway is that MSD is a noisy statistic with poorly understood biases that can lead to inaccurate characterization of data.}
    \label{fig:DiffBox}
\end{figure}

The Renewal-Reward framework, introduced for intracellular transport by Krishnan et al.~\cite{krishnan2011renewal} and implicitly used by Popovic et al.~\cite{popovic2011stochastic}, and further developed by Ciocanel et al.~\cite{ciocanel2020renewal}, presents an opportunity for a statistical inference method that can be applied to segmented paths. If we assume that the directions of consecutive segments are independent (which is not actually the case for our simulated models), then Ciocanel et al.~\cite{ciocanel2020renewal} showed that the asymptotic diffusivity can be written 
\begin{equation} \label{eq:diffusivity-effective}
    D_{\text{eff}} = \frac{\langle (\Delta X)^2 \rangle}{4 \langle \Delta T \rangle}. 
\end{equation}
The $\Delta X$ and $\Delta T$ averages are computed directly from segment information, and this formula was used in \cite{RayensEtAl} to estimate the effective diffusivity of lysosomes in different regions of a cell. While there is some agreement between this method and the more traditional approach employed in \cite{Payne2014}, we can use our simulated system to show that estimation of diffusivity can be affected by the choice of frame rate. In \cite{Payne2014}, the frame rate was 1 Hz, which yielded MSD slopes that were amenable to linear fit, while in \cite{RayensEtAl}, the frame rate was 20 Hz. 

In Figure \ref{fig:DiffBox}, we show estimates for the effective diffusivity at different frame rates using Equation~\ref{eq:diffusivity-effective} and bootstrap resampling of paths. Notably, the bootstrap confidence intervals mostly fail to overlap. Establishing the ``true'' effective diffusivity is more difficult to establish than one might think. The correlations in the model between consecutive angles make the approximation \eqref{eq:diffusivity-effective} inaccurate. We elected to use a simulation approach and found this method to be unreliable as well. We simulated 250 paths for 10,000 seconds and calculated the mean-squared displacement for each. The variation in the calculated values is displayed in the box on the left.

\section{Discussion}
We have developed a protocol for analyzing and simulating noisy, continuous, and piecewise-linear microparticle trajectories. This movement paradigm is widely observed in single particle tracking experiments for molecular motor transport of intracellular cargo, both \emph{in vivo} and \emph{in vitro}. The modeling and analysis flows through the decomposition of paths into distinct segments of constant velocity with uniform iid Gaussian noise. Ensemble behavior is reported in terms of an estimate for the CDF of the invariant distribution of the (non-Markovian) motor speed process. We call this statistical summary the Cumulative Speed Allocation (CSA). In terms of modeling and faithful simulation, we show that the CSA contains enough information to distinguish between qualitatively distinct motor-cargo populations, and allows for customization of simulations via experimentally-tuned parameters to mimic qualitative different movement regimes. Moreover, we have demonstrated a general need to emphasize inference methods that are robust with respect to choice of frame rate and segmentation algorithm.

Throughout this work, we have highlighted a surprising bias-vs-variance tradeoff that occurs across different frame rates. For example, we observed a bias in the estimation of time spent \emph{Motile} for slow frame rates, but a general increase in the uncertainty for faster frame rates. We investigated this phenomenon by comparing the true state of simulated segments to inferred states and were able to quantify the causes of the inference gaps found at each frame rate. Similarly, estimates for effective diffusivity depended on frame rate in an unexpected way, without a clear answer for what frame rate yields the best estimate. For these reasons, we believe the notion of ``optimal'' frame rate bears further investigation, noting that it ultimately depends on the statistical characterization of interest.

The work we have presented here is fundamentally about the compromise of  collecting ever more precise data concerning molecular motor transport, but encountering fundamental physical limits. Any \emph{in vivo} observation technique risks disrupting the process it seeks to observe and there are hard physical limits to the amount of information that can be gained about profoundly small objects being observed over extremely brief windows of time. There are on-going technology improvements that will affect assumptions underpinning this work, but we believe that tradeoffs of the kind described here are intrinsic to all micro- or nano-particle tracking.

There are many ways in which the model we consider can be modified and improved. For example, we do not thoroughly address the issue of correlation/anti-correlation between the velocities of consecutive segments. Our model includes a probability of reversal and a probability of pausing but then continuing in the same direction, but a systematic study of how to parameterize velocities and directions remains unfinished. This is because the inference of switch rates is profoundly affected by the tendencies of different segmentation algorithms. A ``switchy'' algorithm will break up single motile runs, for example, resulting in an inference of shorter run lengths, but higher correlation between consecutive segments. We conducted an preliminary study to this effect, which informed our parameter set choices, but the fit is qualitative in this respect. 

Another unresolved modeling issue relates to whether we should consider the speed and duration of segments to be dependent. There is a purely physical rationale for why duration could scale inversely speed: when microtubules are short or the microtubule network is dense, faster-moving motor-cargo complexes will have shorter run lengths \cite{WijeratneEtAl,KanekoEtAl,ChewEtAl,walcott2022modeling}. Qualitatively, this inverse relationship between segment speed and duration was observed for the lysosome trajectories studied in \cite{RayensEtAl}, Figure S8. However it is not clear how much of the relationship is due to biophysical considerations and how much is merely a statistical artifact of segmentation algorithms. The theory we have presented in this work addresses both cases, but data that explicitly shows the microtubule network underlying motor transport may help clarify the segment speed-duration relationship for different motors in various environments. Of course, robust joint cargo-motor-microtubule observation data would invite the use of even more realism in simulations, \cite{walcott2022modeling}. The CSA provides a new tool for quantifying qualitative changes in transport due to environment factors and a more complete inference/simulation protocol could inform predictive analysis of how motor transport might respond to changes in the supporting microtubule network.

Finally, further consideration should be given to the observation process itself. There is strong evidence that fast frame rates can lead to noisier inference of particle locations \cite{GUO2023107009,MANLEY2010109,MacGillavry,Kashchuk}. In this study we summarized both the cargo fluctuations about the anchor and the measurement error by a single parameter $\sigma$. We expect that the determination of an ``optimal'' frame rate will be affected by frame-rate dependent noise.

Taken all together, the work we have presented  calls for continued simultaneous modeling of biophysical processes \emph{and} explicit detail for the techniques used to observe them. If the goal of the applied mathematician is to provide insight across scales, then it is necessary for models to address the challenges that arise in accurately assessing essential micro-scale details.

\section*{Use of AI tools declaration}
The authors declare they have not used Artificial Intelligence (AI) tools in the creation of this article.

\section*{Acknowledgments}

This research was supported by the NSF-Simons Southeast Center for Mathematics and Biology (SCMB) through NSF-DMS1764406 and Simons Foundation-SFARI 594594.

\section*{Conflict of interest}

The authors declare there is no conflict of interest.

\bibliographystyle{plain}
\bibliography{SimulationPaper}

\begin{thebibliography}{10}

\bibitem{BA20183591}
Qinle Ba, Guruprasad Raghavan, Kirill Kiselyov, and Ge~Yang.
\newblock {Whole-Cell Scale Dynamic Organization of Lysosomes Revealed by
  Spatial Statistical Analysis}.
\newblock {\em Cell Reports}, 23(12):3591--3606, 2018.

\bibitem{Payne2014}
Debjyoti Bandyopadhyay, Austin Cyphersmith, Jairo~A. Zapata, Y.~Joseph Kim, and
  Christine~K. Payne.
\newblock {Lysosome Transport as a Function of Lysosome Diameter}.
\newblock {\em PLOS ONE}, 9(1):1--6, 2014.

\bibitem{BARLAN2013483}
Kari Barlan, Molly~J Rossow, and Vladimir~I Gelfand.
\newblock {The Journey of the Organelle: Teamwork and Regulation in
  Intracellular Transport}.
\newblock {\em Current Opinion in Cell Biology}, 25(4):483--488, 2013.
\newblock Cell organelles.

\bibitem{BarryHartigan}
Daniel Barry and J.~A. Hartigan.
\newblock {A Bayesian Analysis for Change Point Problems}.
\newblock {\em Journal of the American Statistical Association},
  88(421):309--319, 1993.

\bibitem{BressloffNewby}
Paul~C. Bressloff and Jay~M. Newby.
\newblock Stochastic models of intracellular transport.
\newblock {\em Reviews of Modern Physics}, 85:135--196, 2013.

\bibitem{brooks1999probabilistic}
Elizabeth~A Brooks.
\newblock Probabilistic methods for a linear reaction-hyperbolic system with
  constant coefficients.
\newblock {\em The Annals of Applied Probability}, 9(3):719--731, 1999.

\bibitem{CAVISTON2006530}
Juliane~P. Caviston and Erika~L.F. Holzbaur.
\newblock Microtubule motors at the intersection of trafficking and transport.
\newblock {\em Trends in Cell Biology}, 16(10):530--537, 2006.
\newblock Membrane Dynamics.

\bibitem{Chen2021}
Ziyuan Chen, Laurent Geffroy, and Julie~S. Biteen.
\newblock {NOBIAS: Analyzing Anomalous Diffusion in Single-Molecule Tracks With
  Nonparametric Bayesian Inference}.
\newblock {\em Frontiers in Bioinformatics}, 1, 2021.

\bibitem{ChewEtAl}
Wei-Xiang Chew, Gil Henkin, Francois Nedelec, and Thomas Surrey.
\newblock Effects of microtubule length and crowding on active microtubule
  network organization.
\newblock {\em iScience}, 26(2):106063, 2023.

\bibitem{ciocanel2020renewal}
Maria-Veronica Ciocanel, John Fricks, Peter~R Kramer, and Scott~A McKinley.
\newblock {Renewal Reward Perspective on Linear Switching Diffusion Systems in
  Models of Intracellular Transport}.
\newblock {\em Bulletin of Mathematical Biology}, 82(10):1--36, 2020.

\bibitem{Das2009}
Raibatak Das, Christopher~W. Cairo, and Daniel Coombs.
\newblock {A Hidden Markov Model for Single Particle Tracks Quantifies Dynamic
  Interactions between LFA-1 and the Actin Cytoskeleton}.
\newblock {\em PLOS Computational Biology}, 5(11):1--16, 11 2009.

\bibitem{Diekmann}
Robin Diekmann, Maurice Kahnwald, Andreas Schoenit, Joran Deschamps, Ulf Matti,
  and Jonas Ries.
\newblock Optimizing imaging speed and excitation intensity for single-molecule
  localization microscopy.
\newblock {\em Nature Methods}, 17:1--4, 08 2020.

\bibitem{CPLASS-wip}
Linh Do, Keish~J. Cook, Christine~K. Payne, and Scott~A. McKinley.
\newblock Work in progress.

\bibitem{ENCALADA2011}
Sandra~E. Encalada, Lukasz Szpankowski, Chun hong Xia, and Lawrence~S.B.
  Goldstein.
\newblock {Stable Kinesin and Dynein Assemblies Drive the Axonal Transport of
  Mammalian Prion Protein Vesicles}.
\newblock {\em Cell}, 144(4):551--565, 2011.

\bibitem{EncaladaEtAl}
Sandra~E. Encalada, Lukasz Szpankowski, Chun-hong Xia, and Lawrence~S.B.
  Goldstein.
\newblock Stable kinesin and dynein assemblies drive the axonal transport of
  mammalian prion protein vesicles.
\newblock {\em Cell}, 144(4):551--565, 2011.

\bibitem{Erdman20071}
Chandra Erdman and John~W. Emerson.
\newblock {bcp: An R Package for Performing a Bayesian Analysis of Change Point
  Problems}.
\newblock {\em Journal of Statistical Software}, 23(3):1--13, 2007.

\bibitem{CPOP}
Paul Fearnhead and Daniel Grose.
\newblock cpop: Detecting changes in piecewise-linear signals.
\newblock {\em Journal of Statistical Software}, 109(7):1–--30, 2024.

\bibitem{FengEtAl}
Qingzhou Feng, Keith Mickolajczyk, Geng-Yuan Chen, and William Hancock.
\newblock {Motor Reattachment Kinetics Play a Dominant Role in
  Multimotor-Driven Cargo Transport}.
\newblock {\em Biophysical Journal}, 114:400--409, 2018.

\bibitem{10.1242/jcs.142943}
Liang Gao, Alejandro Garcia-Uribe, Yan Liu, Chiye Li, and Lihong~V. Wang.
\newblock {Photobleaching imprinting microscopy: seeing clearer and deeper}.
\newblock {\em Journal of Cell Science}, 127(2):288--294, 2014.

\bibitem{Grafstein}
B~Grafstein and D~S Forman.
\newblock Intracellular transport in neurons.
\newblock {\em Physiological Reviews}, 60(4):1167--1283, 1980.

\bibitem{GUO2023107009}
Xingyi Guo, Dean Ta, and Kailiang Xu.
\newblock Frame rate effects and their compensation on super-resolution
  microvessel imaging using ultrasound localization microscopy.
\newblock {\em Ultrasonics}, 132:107009, 2023.

\bibitem{GUZIK2004443}
Brian~W Guzik and Lawrence~SB Goldstein.
\newblock Microtubule-dependent transport in neurons: steps towards an
  understanding of regulation, function and dysfunction.
\newblock {\em Current Opinion in Cell Biology}, 16(4):443--450, 2004.

\bibitem{SpotOnSPT}
Anders~S Hansen, Maxime Woringer, Jonathan~B Grimm, Luke~D Lavis, Robert Tjian,
  and Xavier Darzacq.
\newblock {Robust model-based analysis of single-particle tracking experiments
  with Spot-On}.
\newblock {\em eLife}, 7:e33125, 2018.

\bibitem{hughes2011matrix}
John Hughes, William~O Hancock, and John Fricks.
\newblock A matrix computational approach to kinesin neck linker extension.
\newblock {\em Journal of Theoretical Biology}, 269(1):181--194, 2011.

\bibitem{JensenEtAl}
Melanie~A. Jensen, Qingzhou Feng, William~O. Hancock, and Scott~A. McKinley.
\newblock A change point analysis protocol for comparing intracellular
  transport by different molecular motor combinations.
\newblock {\em Mathematical Biosciences and Engineering}, 18(6):8962--8996,
  2021.

\bibitem{JONGSMA2016152}
Marlieke~L.M. Jongsma, Ilana Berlin, Ruud~H.M. Wijdeven, Lennert Janssen,
  George~M.C. Janssen, Malgorzata~A. Garstka, Hans Janssen, Mark Mensink,
  Peter~A. van Veelen, Robbert~M. Spaapen, and Jacques Neefjes.
\newblock {An ER-Associated Pathway Defines Endosomal Architecture for
  Controlled Cargo Transport}.
\newblock {\em Cell}, 166(1):152--166, 2016.

\bibitem{KanekoEtAl}
Taikopaul Kaneko, Ken'ya Furuta, Kazuhiro Oiwa, Hirofumi Shintaku, Hidetoshi
  Kotera, and Ryuji Yokokawa.
\newblock Different motilities of microtubules driven by kinesin-1 and
  kinesin-14 motors patterned on nanopillars.
\newblock {\em Science Advances}, 6(4), 2020.

\bibitem{KARKI199945}
Sher Karki and Erika~LF Holzbaur.
\newblock Cytoplasmic dynein and dynactin in cell division and intracellular
  transport.
\newblock {\em Current Opinion in Cell Biology}, 11(1):45--53, 1999.

\bibitem{KARSLAKE202116}
Joshua~D. Karslake, Eric~D. Donarski, Sarah~A. Shelby, Lucas~M. Demey,
  Victor~J. DiRita, Sarah~L. Veatch, and Julie~S. Biteen.
\newblock {SMAUG: Analyzing single-molecule tracks with nonparametric Bayesian
  statistics}.
\newblock {\em Methods}, 193:16--26, 2021.

\bibitem{Kashchuk}
Anatolii~V. Kashchuk, Oleksandr Perederiy, Chiara Caldini, Lucia Gardini,
  Francesco~Saverio Pavone, Anatoliy~M. Negriyko, and Marco Capitanio.
\newblock {Particle Localization Using Local Gradients and Its Application to
  Nanometer Stabilization of a Microscope}.
\newblock {\em ACS Nano}, 17(2):1344--1354, 2023.

\bibitem{Kepten2015}
Eldad Kepten, Aleksander Weron, Grzegorz Sikora, Krzysztof Burnecki, and Yuval
  Garini.
\newblock {Guidelines for the Fitting of Anomalous Diffusion Mean Square
  Displacement Graphs from Single Particle Tracking Experiments}.
\newblock {\em PLOS ONE}, 10(2):1--10, 2015.

\bibitem{Klumpp2005}
Stefan Klumpp and Reinhard Lipowsky.
\newblock {Cooperative Cargo Transport by Several Molecular Motors}.
\newblock {\em Proceedings of the National Academy of Sciences},
  102(48):17284--17289, 2005.

\bibitem{krishnan2011renewal}
Arjun Krishnan and Bogdan~I Epureanu.
\newblock Renewal-reward process formulation of motor protein dynamics.
\newblock {\em {Bulletin of Mathematical Biology}}, 73:2452--2482, 2011.

\bibitem{LeeEtAl}
Yerim Lee, Carey Phelps, Tao Huang, Barmak Mostofian, Lei Wu, Ying Zhang, Kai
  Tao, Young~Hwan Chang, Philip~JS Stork, Joe~W Gray, et~al.
\newblock {High-throughput, single-particle tracking reveals nested membrane
  domains that dictate KRasG12D diffusion and trafficking}.
\newblock {\em eLife}, 8:e46393, 2019.

\bibitem{LeleketAl}
Mickael Lelek, Melina~T. ~, Gyparaki, Gerti Beliu, Florian Schueder, Juliette
  Griffie, Suliana Manley, Ralf Jungmann, Markus Sauer, Melike Lakadamya, and
  Christophe Zimmer.
\newblock Single-molecule localization microscopy.
\newblock {\em Nat Rev Methods Primers}, 1(39), 2021.

\bibitem{MacGillavry}
Harold~D. MacGillavry and Thomas~A. Blanpied.
\newblock {Single-Molecule Tracking Photoactivated Localization Microscopy to
  Map Nano-Scale Structure and Dynamics in Living Spines}.
\newblock {\em Current Protocols in Neuroscience}, 65(1), 2013.

\bibitem{MANLEY2010109}
Suliana Manley, Jennifer~M. Gillette, and Jennifer Lippincott-Schwartz.
\newblock {Chapter Five - Single-Particle Tracking Photoactivated Localization
  Microscopy for Mapping Single-Molecule Dynamics}.
\newblock In {\em Single Molecule Tools, Part B:Super-Resolution, Particle
  Tracking, Multiparameter, and Force Based Methods}, volume 475, pages
  109--120. Academic Press, 2010.

\bibitem{Manzo2015}
Carlo Manzo and Maria~F. Garcia-Parajo.
\newblock A review of progress in single particle tracking: from methods to
  biophysical insights.
\newblock {\em Reports on Progress in Physics}, 78, 2015.

\bibitem{CPLASS-git}
Scott~A. McKinley and Linh Do.
\newblock {CPLASS Segmentation Algorithm}.
\newblock
  \href{https://github.com/scottmckinley/stochastics-lab}{https://github.com/scottmckinley/stochastics-lab}.

\bibitem{Michalet2010}
Xavier Michalet.
\newblock {Mean square displacement analysis of single-particle trajectories
  with localization error: Brownian motion in an isotropic medium}.
\newblock {\em Physical Review E}, 82:041914, 2010.

\bibitem{Mickolajczyk418053}
Keith~J. Mickolajczyk, Elisabeth~A. Geyer, Tae Kim, Luke~M. Rice, and
  William~O. Hancock.
\newblock Direct observation of individual tubulin dimers binding to growing
  microtubules.
\newblock {\em Proceedings of the National Academy of Sciences of the United
  States of America}, 116, 2019.

\bibitem{Monnier2015}
Nilah Monnier, Zachary Barry, Hye~Yoon Park, Kuan-Chung Su, Zach Katz, Brian
  English, Arkajit Dey, Keyao Pan, Iain Cheeseman, Robert Singer, and Mark
  Bathe.
\newblock Inferring transient particle transport dynamics in live cells.
\newblock {\em Nature methods}, 12, 07 2015.

\bibitem{MONNIER2012616}
Nilah Monnier, Syuan-Ming Guo, Masashi Mori, Jun He, Péter Lénárt, and Mark
  Bathe.
\newblock {Bayesian Approach to MSD-Based Analysis of Particle Motion in Live
  Cells}.
\newblock {\em Biophysical Journal}, 103(3):616--626, 2012.

\bibitem{Klumpp2008}
Melanie J.~I. Müller, Stefan Klumpp, and Reinhard Lipowsky.
\newblock Tug-of-war as a cooperative mechanism for bidirectional cargo
  transport by molecular motors.
\newblock {\em Proceedings of the National Academy of Sciences},
  105(12):4609--4614, 2008.

\bibitem{MULLER20102610}
Melanie~J.I. Müller, Stefan Klumpp, and Reinhard Lipowsky.
\newblock {Bidirectional Transport by Molecular Motors: Enhanced Processivity
  and Response to External Forces}.
\newblock {\em Biophysical Journal}, 98(11):2610--2618, 2010.

\bibitem{Neumann}
Sylvia Neumann, Romain Chassefeyre, George~E. Campbell, and Sandra~E. Encalada.
\newblock {KymoAnalyzer: a software tool for the quantitative analysis of
  intracellular transport in neurons}.
\newblock {\em Traffic}, 18(1):71--88, 2017.

\bibitem{pavliotis2014stochastic}
Grigorios~A Pavliotis.
\newblock {\em {Stochastic Processes and Their Applications}}.
\newblock Springer, 2014.

\bibitem{Persson2013ExtractingID}
Fredrik Persson, Martin Lind{\'e}n, Cecilia Unoson, and Johan Elf.
\newblock Extracting intracellular diffusive states and transition rates from
  single-molecule tracking data.
\newblock {\em Nature Methods}, 10:265--269, 2013.

\bibitem{popovic2011stochastic}
Lea Popovic, Scott~A McKinley, and Michael~C Reed.
\newblock {A Stochastic Compartmental Model for Fast Axonal Transport}.
\newblock {\em SIAM Journal on Applied Mathematics}, 71(4):1531--1556, 2011.

\bibitem{QIAN1991910}
H.~Qian, M.P. Sheetz, and E.L. Elson.
\newblock {Single particle tracking. Analysis of diffusion and flow in
  two-dimensional systems}.
\newblock {\em Biophysical Journal}, 60(4):910--921, 1991.

\bibitem{RayensEtAl}
Nathan~T. Rayens, Keisha~J. Cook, Scott~A. McKinley, and Christine~K. Payne.
\newblock {Transport of lysosomes decreases in the perinuclear region: Insights
  from changepoint analysis}.
\newblock {\em Biophysical Journal}, 121:1--14, 2022.

\bibitem{RayensEtAl2}
Nathan~T. Rayens, Keisha~J. Cook, Scott~A. McKinley, and Christine~K. Payne.
\newblock Palmitate-mediated disruption of the endoplasmic reticulum decreases
  intracellular vesicle motility.
\newblock {\em Biophysical Journal}, 122:1355--1363, 2023.

\bibitem{RUTHARDT20111199}
Nadia Ruthardt, Don~C Lamb, and Christoph Bräuchle.
\newblock {Single-particle Tracking as a Quantitative Microscopy-based Approach
  to Unravel Cell Entry Mechanisms of Viruses and Pharmaceutical
  Nanoparticles}.
\newblock {\em Molecular Therapy}, 19(7):1199--1211, 2011.

\bibitem{RODING2014140}
Magnus Röding, Ming Guo, David~A. Weitz, Mats Rudemo, and Aila Särkkä.
\newblock {Identifying directional persistence in intracellular particle motion
  using Hidden Markov Models}.
\newblock {\em Mathematical Biosciences}, 248:140--145, 2014.

\bibitem{mogre2020getting}
Saurabh S~Mogre, Aidan~I Brown, and Elena~F Koslover.
\newblock Getting around the cell: physical transport in the intracellular
  world.
\newblock {\em Physical Biology}, 17(6):061003, 2020.

\bibitem{Saxton}
Michael~J. Saxton and Ken Jacobson.
\newblock {Single-Particle Tracking: Applications to Membrane Dynamics}.
\newblock {\em Annual Review of Biophysics}, 26:373--399, 1997.

\bibitem{schroeder2020lps}
Holly~A Schroeder, Jay Newby, Alison Schaefer, Babu Subramani, Alan Tubbs,
  M~Gregory Forest, Ed~Miao, and Samuel~K Lai.
\newblock {LPS-binding IgG arrests actively motile Salmonella Typhimurium in
  gastrointestinal mucus}.
\newblock {\em Mucosal Immunology}, 13(5):814--823, 2020.

\bibitem{serfozo2009basics}
Richard Serfozo.
\newblock {\em {Basics of Applied Stochastic Processes}}.
\newblock Springer Science \& Business Media, 2009.

\bibitem{Shen2017}
Hao Shen, Lawrence~J. Tauzin, Rashad Baiyasi, Wenxiao Wang, Nicholas Moringo,
  Bo~Shuang, and Christy~F. Landes.
\newblock {Single Particle Tracking: From Theory to Biophysical Applications}.
\newblock {\em Chemical Reviews}, 117(11):7331--7376, 2017.

\bibitem{Sokolov2012}
Igor~M. Sokolov.
\newblock Models of anomalous diffusion in crowded environments.
\newblock {\em Soft Matter}, 8:9043--9052, 2012.

\bibitem{Staff20111015}
Nathan~P. Staff, Eduardo~E. Benarroch, and Christopher~J. Klein.
\newblock Neuronal intracellular transport and neurodegenerative disease.
\newblock {\em Neurology}, 76(11):1015--1020, 2011.

\bibitem{SUH200563}
Junghae Suh, Michelle Dawson, and Justin Hanes.
\newblock Real-time multiple-particle tracking: applications to drug and gene
  delivery.
\newblock {\em Advanced Drug Delivery Reviews}, 57(1):63--78, 2005.

\bibitem{SvenssonEtAl}
Carl-Magnus Svensson, Katharina Reglinski, Wolfgang Schliebs, Erdmannn Ralf,
  Christian Eggeling, and Marc~Thilo Figge.
\newblock {Quantitative analysis of peroxisome tracks using a Hidden Markov
  Model}.
\newblock {\em Scientific Reports}, 13, 2023.

\bibitem{Trackmate}
Jean-Yves Tinevez, Nick Perry, Johannes Schindelin, Genevieve~M. Hoopes,
  Gregory~D. Reynolds, Emmanuel Laplantine, Sebastian~Y. Bednarek, Spencer~L.
  Shorte, and Kevin~W. Eliceiri.
\newblock {TrackMate: An Open and Extensible Platform for Single-Particle
  Tracking}.
\newblock {\em Methods}, 115:80--90, 2017.

\bibitem{vale1987}
Ronald~D. Vale.
\newblock {Intracellular Transport Using Microtubule-Based Motors}.
\newblock {\em Annual Review of Cell and Developmental Biology}, 3(Volume 3,
  1987):347--378, 1987.

\bibitem{Vale200088}
Ronald~D. Vale and Ronald~A. Milligan.
\newblock The way things move: Looking under the hood of molecular motor
  proteins.
\newblock {\em Science}, 288(5463):88, 2000.

\bibitem{walcott2022modeling}
S~Walcott and DM~Warshaw.
\newblock {Modeling myosin Va liposome transport through actin filament
  networks reveals a percolation threshold that modulates transport
  properties}.
\newblock {\em Molecular Biology of the Cell}, 33(2):ar18, 2022.

\bibitem{wang2015minimizing}
Ying-Ying Wang, Kenetta~L Nunn, Dimple Harit, Scott~A McKinley, and Samuel~K
  Lai.
\newblock Minimizing biases associated with tracking analysis of submicron
  particles in heterogeneous biological fluids.
\newblock {\em Journal of Controlled Release}, 220:37--43, 2015.

\bibitem{WijeratneEtAl}
Sithara~S. Wijeratne, Shane~A Fiorenza, Alex~E. Neary, Radhika Subramanian, and
  Meredith~D. Betterton.
\newblock Motor guidance by long-range communication on the microtubule
  highway.
\newblock {\em Proceedings of the National Academy of Sciences of the United
  States of America}, 119(28):e2120193119, 2022.

\bibitem{YIN2018217}
Shuhui Yin, Nancy Song, and Haw Yang.
\newblock {Detection of Velocity and Diffusion Coefficient Change Points in
  Single-Particle Trajectories}.
\newblock {\em Biophysical Journal}, 115(2):217--229, 2018.

\bibitem{Lakadamyali2013}
Štefan Bálint, Ione~Verdeny Vilanova, Ángel Sandoval~Álvarez, and Melike
  Lakadamyali.
\newblock Correlative live-cell and superresolution microscopy reveals cargo
  transport dynamics at microtubule intersections.
\newblock {\em Proceedings of the National Academy of Sciences},
  110(9):3375--3380, 2013.

\end{thebibliography}

\newpage

\section*{Supplementary}

\subsection{Various Trajectories and their respective Analyses}

\begin{figure*}[h]
    \centering
        \includegraphics[height=4.0in]{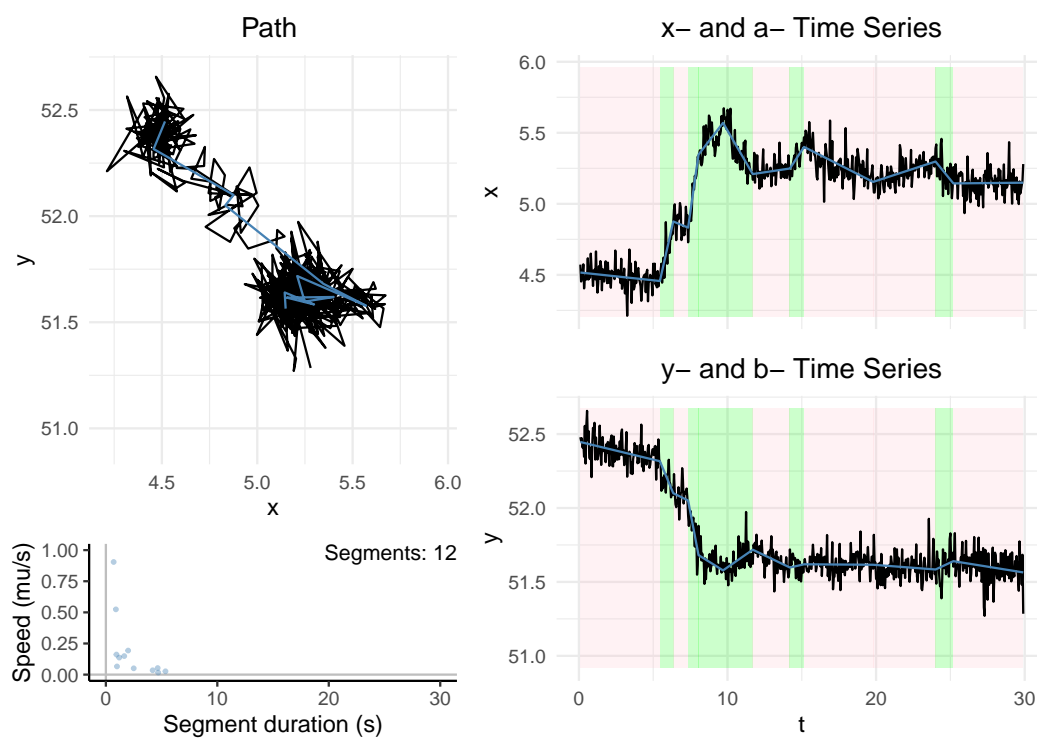}
        
    \caption{Trajectory analysis of the experimental lysosome trajectory in the periphery of a cell from \cite{RayensEtAl}. The figure depicts the panel of trajectory analysis for a respective trajectory. The \textit{black} line denotes the trajectory; in the x-y coordinate plane as well as the time series for each coordinate. The \textit{blue} line in each figure represents the inferred anchor position described in Section 2.2 in the main text. After implementing the changepoint algorithm on each trajectory, the right panel shows the state segments for each time series. Each \textit{green} panel denotes a Motile segment and each \textit{red} panel denotes a Stationary segment. The segment duration versus speed plot shows the speed of each respective inferred segment of the trajectory. Keep in mind that there exist Motile to Motile state changes and Stationary to Stationary state changes that result in side-by-side green panels and red panels.}
    
\end{figure*}
    
\begin{figure*}[h]
    \centering    
   \includegraphics[height=4.0in]{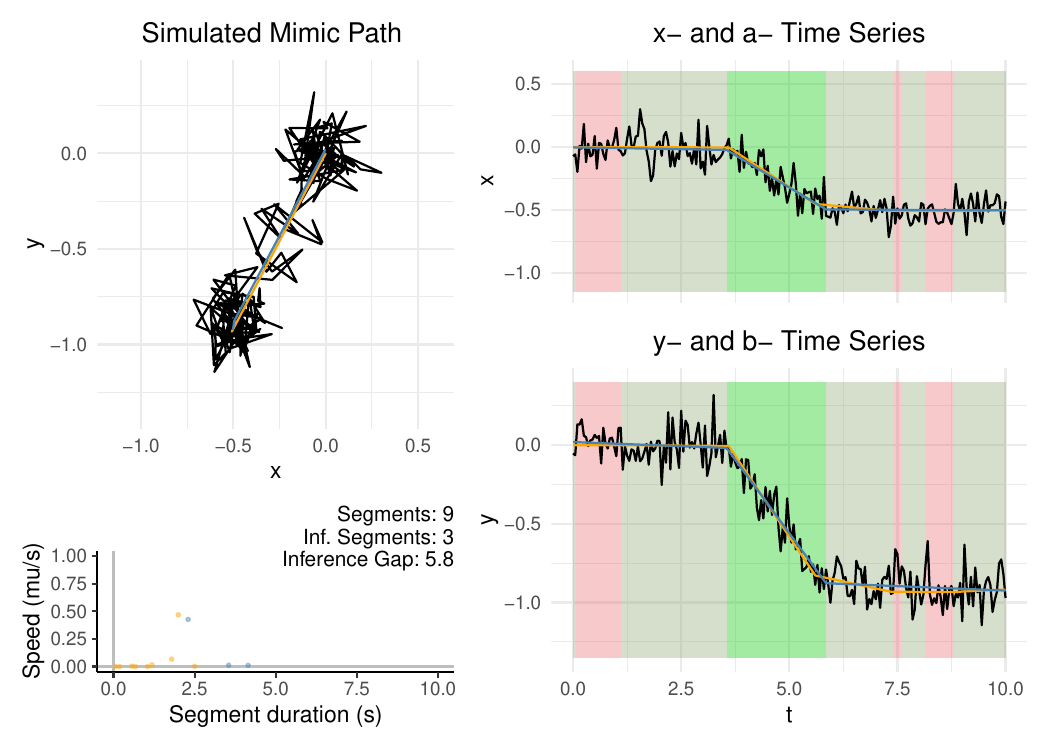}
       
    \caption{A trajectory simulated at 20Hz using the Mimic parameter set in Table 1 in the main text. The figure depicts the panel of trajectory analysis for a respective trajectory. The \textit{black} line and the \textit{blue} line are as defined in Figure 1. The \textit{orange} line in each panel represents the actual (simulated) anchor position described in Section 2.2 in the main text. After implementing the changepoint algorithm on each trajectory, the right panel in each subfigure shows the inferred states overlapping the actual states. Overlapping \textit{green} and \textit{red} panels become \textit{gray}, denoting the inference gap (Section 3.1.1. in the main text), or the difference between the actual simulated segments and those determined by the changepoint algorithm.
    }
\end{figure*}

\begin{figure*}[h]
    \centering
            \begin{subfigure}[t]{0.8\textwidth}
        \centering
        \includegraphics[width=\textwidth]{{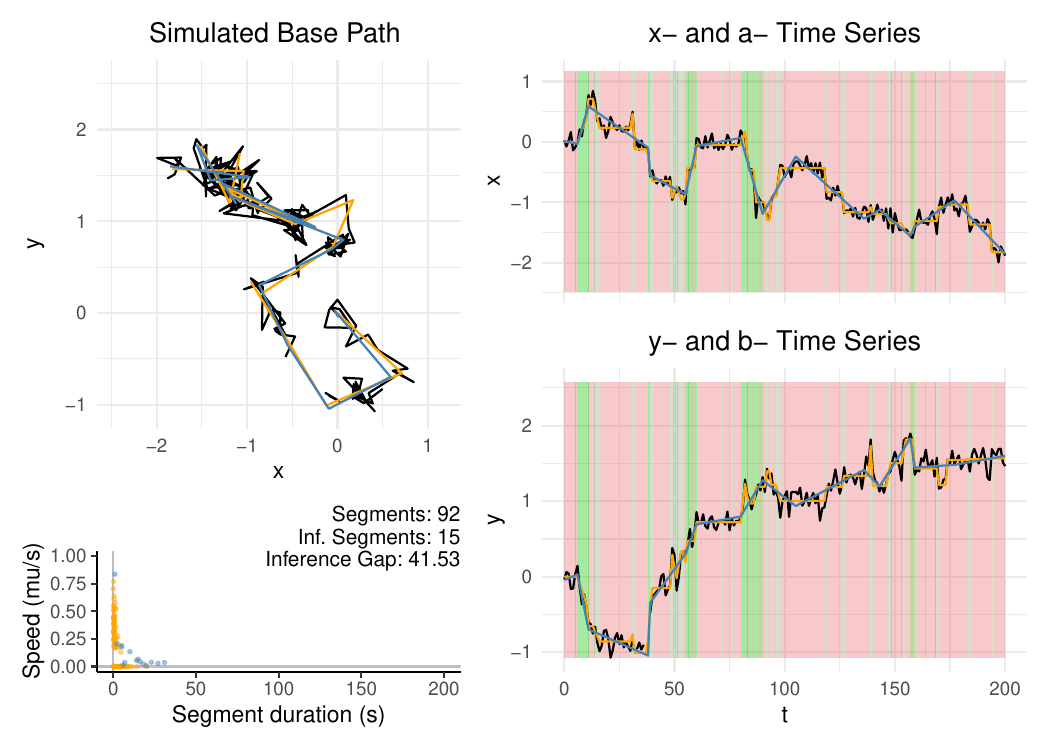}}
    \end{subfigure}
    
    \begin{subfigure}[t]{0.8\textwidth}
        \centering
        \includegraphics[width=\textwidth]{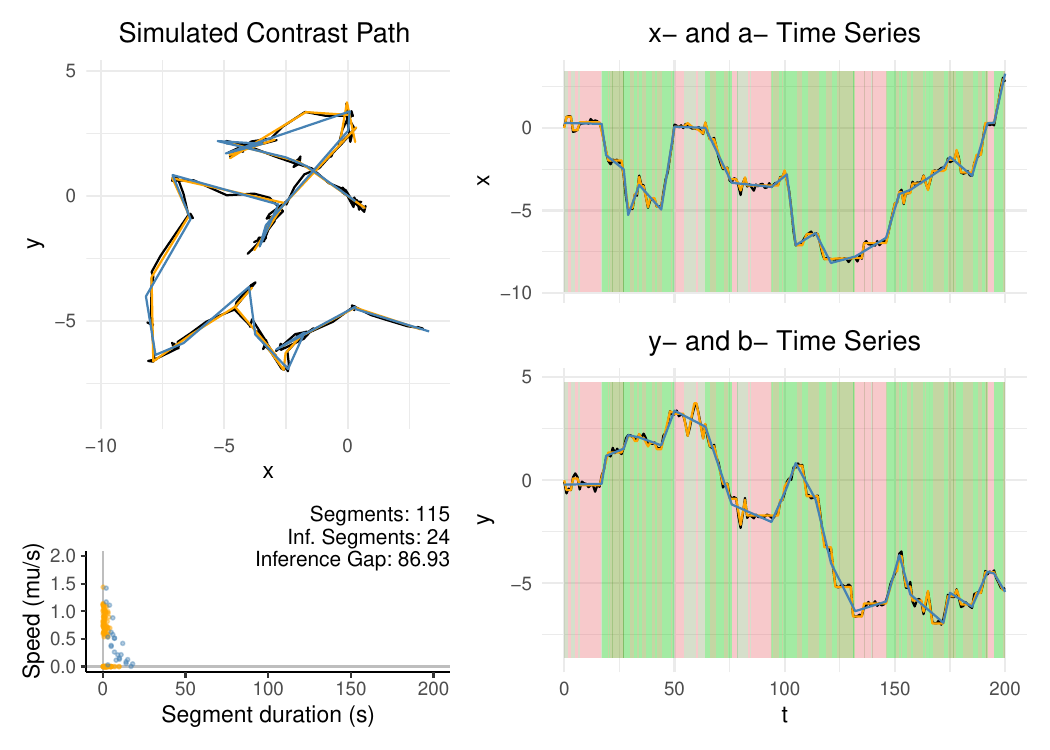}
    \end{subfigure}
        
        \caption{A Base and Contrast trajectory simulated at 1Hz using the base and contrast parameter sets in Table 1 of the main text. The figure depicts the actual and inferred trajectory analysis. }
\end{figure*}

\begin{figure*}[h]
    \centering
            \begin{subfigure}[t]{0.8\textwidth}
        \centering
        \includegraphics[width=\textwidth]{{FigureDraft2/Sim13_CKP_25Hz_cplass_featured_ActInfpath36.pdf}}
    \end{subfigure}
    
    \begin{subfigure}[t]{0.8\textwidth}
        \centering
        \includegraphics[width=\textwidth]{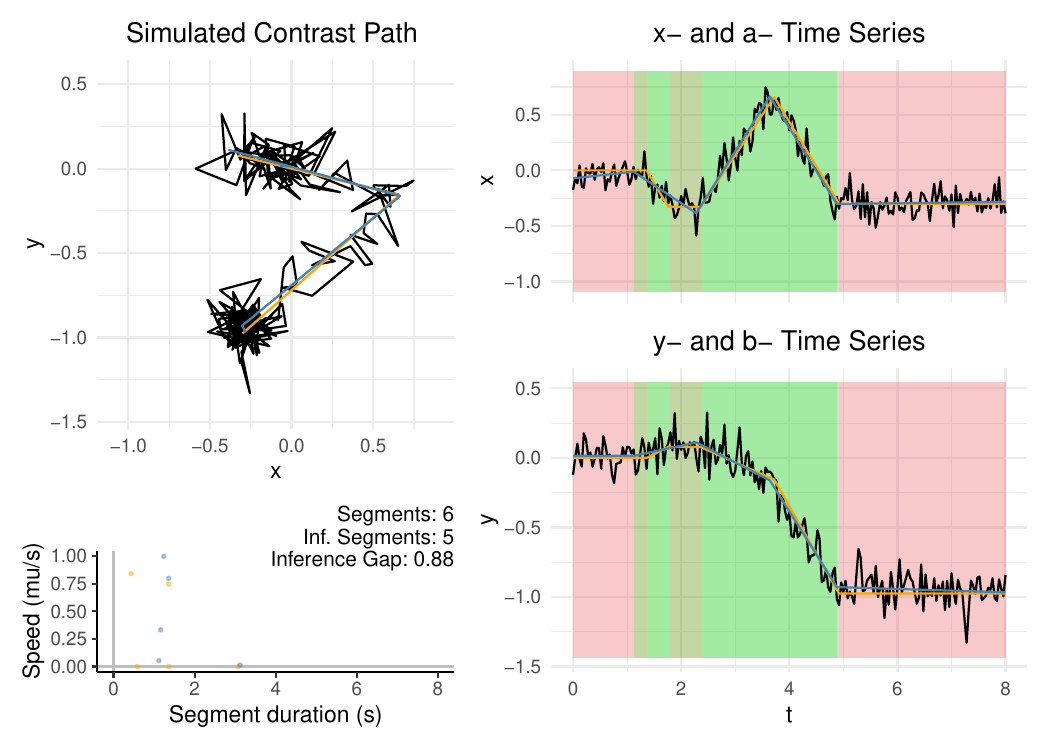}
    \end{subfigure}
        
        \caption{A Base and Contrast trajectory simulated at 25Hz using the base and contrast parameter sets in Table 1 of the main text. The figure depicts the actual and inferred trajectory analysis. }
\end{figure*}

\begin{figure*}[h]
    \centering
            \begin{subfigure}[t]{0.8\textwidth}
        \centering
        \includegraphics[width=\textwidth]{{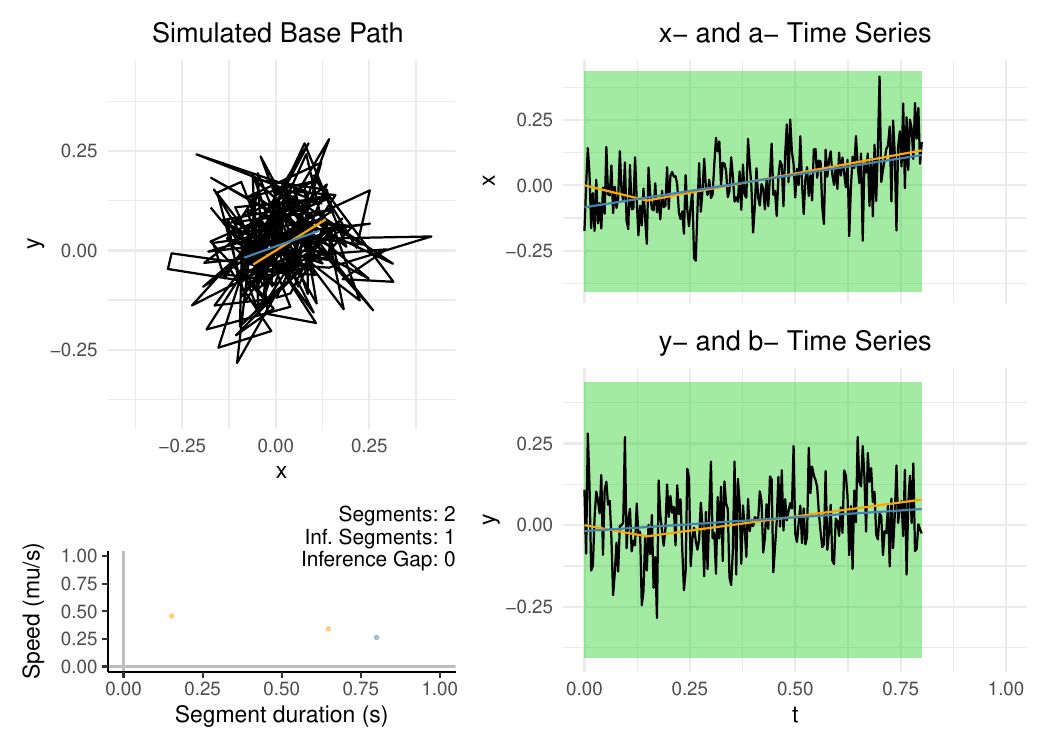}}
    \end{subfigure}
    
    \begin{subfigure}[t]{0.8\textwidth}
        \centering
        \includegraphics[width=\textwidth]{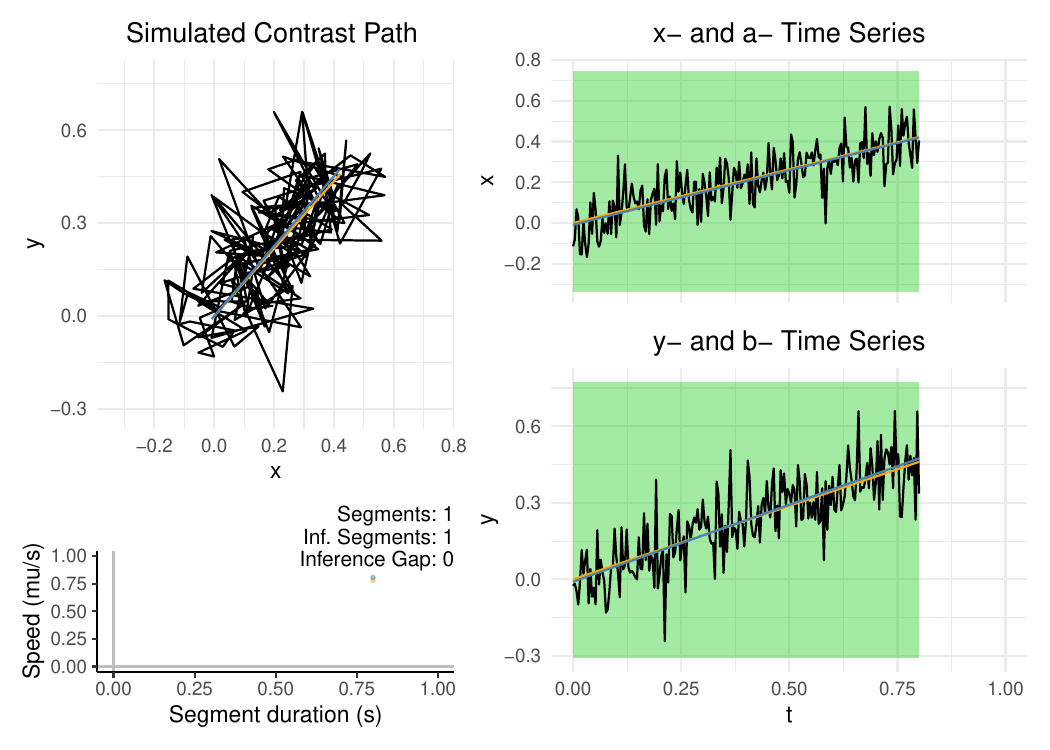}
    \end{subfigure}
        
        \caption{A Base and Contrast trajectory simulated at 250Hz using the base and contrast parameter sets in Table 1 of the main text. The figure depicts the actual and inferred trajectory analysis. }
\end{figure*}

\end{document}